\documentclass[aps,pra,twocolumn,floatfix,superscriptaddress]{revtex4-1}

\usepackage[utf8]{inputenc}
\usepackage[T1]{fontenc}
\usepackage{amsmath, amsthm, amssymb}
\usepackage{graphicx}
\usepackage{dcolumn,times}
\usepackage{bm,bbm}
\usepackage{color}

\definecolor{myurlcolor}{rgb}{0,0,0.7}
\definecolor{myrefcolor}{rgb}{0.8,0,0}
\usepackage[unicode=true,pdfusetitle, bookmarks=true,bookmarksnumbered=false,bookmarksopen=false, breaklinks=false,pdfborder={0 0 0},backref=false,colorlinks=true, linkcolor=myrefcolor,citecolor=myurlcolor,urlcolor=myurlcolor]{hyperref}

\newcommand{\proj}[1]{\ket{#1}\!\bra{#1}}
\newcommand{\Tr}[0]{\mathrm{Tr}}

\newcommand{\eref}[1]{(\ref{#1})}
\newcommand{\eqnref}[1]{Eq.~\eref{#1}}
\newcommand{\eqnsref}[2]{Eqs.~\eref{#1} and \eref{#2}}

\newcommand{\figref}[1]{Fig.~\ref{#1}}

\newcommand{\appref}[1]{App.~\ref{#1}}

\newcommand{\refcite}[1]{Ref.~\citep{#1}}
\newcommand{\ot}[0]{\otimes}

\newcommand{\ket}[1]{| #1 \rangle}

\newcommand{\bra}[1]{\langle #1 |}
\newcommand{\braket}[2]{\langle #1|#2\rangle}
\renewcommand{\t}[1]{\textrm{#1}}

\newtheorem{thm}{Theorem}
\newtheorem{cor}[thm]{Corollary}
\newtheorem{lem}[thm]{Lemma}


\DeclareGraphicsExtensions{.png,.pdf,.eps}
\graphicspath{ {./figs/} }

\setlength{\belowcaptionskip}{-14pt}

\begin{document}

\title{Asymptotic role of entanglement in quantum metrology}

\author{R. Augusiak}
\affiliation{ICFO--Institut de Ci\`encies Fot\`oniques, The Barcelona Institute of Science and Technology, 08860 Castelldefels (Barcelona), Spain}
\affiliation{Center for Theoretical Physics, Polish Academy of Sciences, Al.~Lotnik\'ow 32/46, 02-668 Warsaw, Poland}

\author{J. Ko\l{}ody\'{n}ski}
\affiliation{ICFO--Institut de Ciencies Fotoniques, The Barcelona Institute of Science and Technology, 08860 Castelldefels (Barcelona), Spain}

\author{A. Streltsov}
\affiliation{ICFO--Institut de Ciencies Fotoniques, The Barcelona Institute of Science and Technology, 08860 Castelldefels (Barcelona), Spain}

\author{M. N. Bera}
\affiliation{ICFO--Institut de Ciencies Fotoniques, The Barcelona Institute of Science and Technology, 08860 Castelldefels (Barcelona), Spain}
\affiliation{Harish-Chandra Research Institute, Chhatnag Road, Jhunsi, Allahabad 211 019, India}

\author{A. Ac\'in}
\affiliation{ICFO--Institut de Ciencies Fotoniques, The Barcelona Institute of Science and Technology, 08860 Castelldefels (Barcelona), Spain}
\affiliation{ICREA--Institucio Catalana de Recerca i Estudis Avan\c cats, Lluis Companys 23, 08010 Barcelona, Spain}

\author{M. Lewenstein}
\affiliation{ICFO--Institut de Ciencies Fotoniques, The Barcelona Institute of Science and Technology, 08860 Castelldefels (Barcelona), Spain}
\affiliation{ICREA--Institucio Catalana de Recerca i Estudis Avan\c cats, Lluis Companys 23, 08010 Barcelona, Spain}


\begin{abstract}
Quantum systems allow one to sense physical parameters beyond the reach of classical statistics---with resolutions
greater than $1/N$, where $N$ is the number of constituent particles independently probing a parameter.
In the canonical phase sensing scenario the \emph{Heisenberg Limit} $1/N^{2}$ may be reached, which requires,
as we show, both the \emph{relative size of the largest entangled block} and the 
\emph{geometric measure of entanglement} to be nonvanishing as $N\!\to\!\infty$.  Yet, we also demonstrate that 
in the asymptotic $N$ limit any precision scaling arbitrarily close to the Heisenberg Limit ($1/N^{2-\varepsilon}$ with 
any $\varepsilon\!>\!0$) may be attained, even though the system gradually becomes noisier and separable, so that 
both the above entanglement quantifiers asymptotically vanish. Our work shows that sufficiently large quantum systems 
achieve nearly optimal resolutions despite their relative amount of entanglement being arbitrarily small. In deriving our results, 
we establish the \emph{continuity relation of the quantum Fisher information} evaluated for a phaselike parameter,
which lets us link it directly to the geometry of quantum states, and hence naturally to the geometric measure
of entanglement.
\end{abstract}

\maketitle

\section{Introduction}
%
Quantum metrology is a vivid topic of research both at the theoretical and experimental 
levels \citep{Giovannetti2011,Dowling2014,Toth2014,*Demkowicz2015}.
With the help of quantum systems consisting of particles that independently sense a
parameter of interest one may attain sensing resolutions beyond the reach of classical 
statistics---beyond the so-called \emph{Standard Quantum Limit} (SQL) \citep{Giovannetti2004}.
This limit states that the Mean-Squared Error (MSE) of estimation may at best scale inversely 
to the number of particles employed, i.e., as $1/N$. Quantum mechanics allows one to beat the SQL 
and in the canonical phase-sensing scenario reach a $1/N^{2}$ resolution---the \emph{Heisenberg Limit} (HL)---%
a quantum enhancement of precision that limitlessly improves with $N$ \citep{Giovannetti2006}.
Spectacularly, quantum metrology schemes have been experimentally demonstrated to allow for 
enhanced sensing of phaselike parameters in optical interferometry \citep{Mitchell2004,*Nagata2007}, 
 e.g., in gravitational-wave detection \cite{LIGO2011,*LIGO2013}, but also 
in atomic-ensemble experiments of spectroscopy \citep{Leibfried2004,*Roos2006} and 
magnetometry \citep{Wasilewski2010,*Sewell2012}, as well as
in atomic clocks \citep{Appel2009,*Louchet2010}.

Quantum enhancement in metrology is only possible
thanks to the \emph{interparticle entanglement} 
exhibited by the quantum system employed \citep{Pezze2009}.
In fact, resolutions beyond classical limits have been used to 
prove the existence of large-scale entanglement in real atomic systems 
\citep{Riedel2010,*Krischek2011,*Strobel2014,*Lucke2014}.
The main obstacle in such experiments is the noise which destroys 
the entanglement and impairs the sensitivity \citep{Huelga1997,*Maccone2011}. 
For the attained precision to preserve the super-classical scaling,
the number of entangled particles, or, formally, the
entanglement \emph{producibility} \citep{Guhne2005,*Guhne2006} (or \emph{depth} \citep{Sorensen2001}) must grow with 
the system size \citep{Toth2012,*Hyllus2012}.
On the other hand, by studying the ultimate resolutions attainable with noisy
quantum systems, it has been shown that generic uncorrelated (independently disturbing the 
particles) noise-types limit the quantum enhancement to a constant factor 
\citep{Fujiwara2008,*Escher2011,*Demkowicz2012,*Knysh2014}. 
In terms of entanglement properties, such SQL-like sensitivities can then 
be reached for arbitrary large $N$ by grouping the constituent 
particles into separate entangled blocks of finite size \cite{Jarzyna2014}.
Although the protection of entanglement is thus of the highest priority 
for the super-classical precision scaling to be preserved, there also exist states 
that possess all their particles (genuinely) entangled but nonetheless are
useless for metrology \citep{Hyllus2010a}. Moreover, the nature 
of entanglement that is essential for metrological purposes remains unclear, 
as by employing large-scale but yet undistillable entanglement (which could 
be considered of the weakest type \citep{Horodecki2009,*Guhne2009}) one may 
still attain the HL resolution in phase sensing \citep{Czekaj2015}.

\begin{figure}[!t]
\includegraphics[width=1\columnwidth]{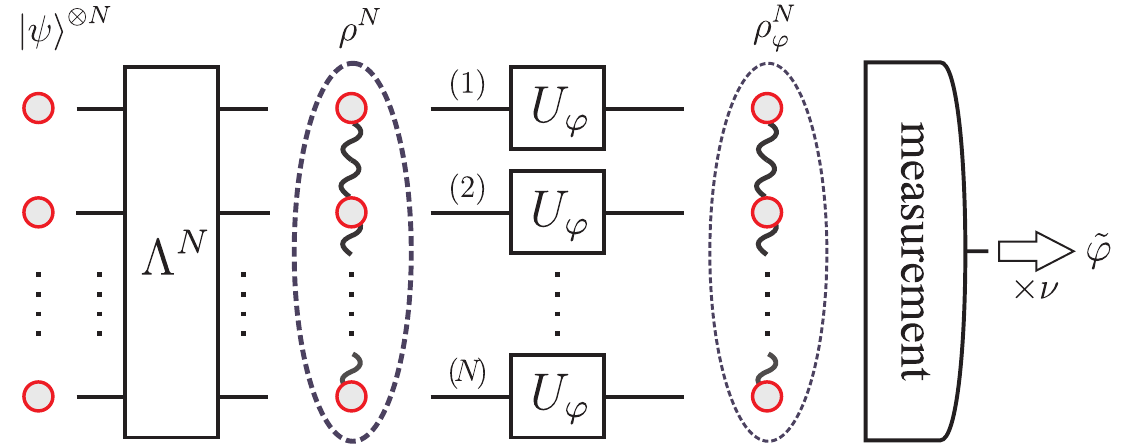}
\caption{%
\textbf{Quantum phase sensing  protocol}---designed 
to most precisely sense fluctuations of a phase-like parameter $\varphi$ 
around its given value. The system is prepared in an $N$-particle entangled state 
$\rho^N$ obtained from $\ket{\psi}^{\otimes N}$ by the preparation map $\Lambda^N$ that 
also incorporates noise. $\varphi$ is encoded on each particle by a unitary $U_\varphi$ and 
the final state $\rho_\varphi^N$ is measured. The procedure is repeated sufficiently many
 times ($\nu\!\gg\!1$) to construct most accurate parameter estimate $\tilde\varphi$.}
\label{fig:scheme}
\end{figure}

In this work, we connect the key metrological performance 
quantifier---the asymptotic scaling of precision---with the entanglement properties 
as quantified relatively to the size of the system employed.
To this end, we first establish a \emph{continuity relation for the quantum Fisher information} (QFI), 
which allows us to connect the metrological properties of quantum states to their geometry. Thanks to the derived continuity,
we are able to upper bound the QFI by the geometric measure of entanglement (GME) \citep{Shimony1995, *Wei2003}. As a result, we demonstrate that, although to attain
the \emph{exact} HL both the relative size of the largest entangled block
and the GME must be asymptotically nonvanishing, any precision scaling 
\emph{arbitrarily close} to HL, $1/N^{2-\varepsilon}$ with $\varepsilon\!>\!0$, is 
achievable despite both these entanglement quantifiers 
decaying with $N\!\to\!\infty$.

\section{Preliminaries}
\subsection{Metrology protocol}
We consider the noisy phase sensing protocol depicted in \figref{fig:scheme},
which allows us to unambiguously approach the problem. We encode
the parameter unitarily, so that the asymptotic precision scaling can be 
actually improved \cite{Fujiwara2008,*Escher2011,*Demkowicz2012,*Knysh2014},
and independently on each particle---so that the quantum-enhanced scaling is 
firmly constrained between SQL and HL ($1/N$ and $1/N^2$) and attributed
solely to the entanglement properties of the quantum state of the system \cite{Boixo2007,*Zwierz2010}. 
On the other hand, for any entanglement quantifier to be comparable 
with the asymptotic precision scaling it must be ``scale-independent'', i.e., 
it cannot grow with $N$ when considering a sequence of states of the same type 
\footnote{%
E.g., as (separable) GHZ states of $N$ particles attain a fixed resolution 
($1/N$) $1/N^2$, irrespectively of $N$ they should yield
the same metrologically relevant entanglement quantifier. 
}. Hence, a notion of entanglement ``size'' may only be quantified relatively 
to the total system size, while the entanglement ``amount'' must not change 
by simply increasing $N$. We define adequately both such notions below,
but let us already stress that the latter we find to be naturally emergent 
by relating the metrological properties of quantum states to their geometry.

We follow the frequentist approach to estimation which applies in the 
regime of sufficiently many independent protocol repetitions 
($\nu\!\gg\!1$ in \figref{fig:scheme}),
while sensing small parameter fluctuations around its certain known 
value \cite{Lehmann1998}. Then, the MSE, $\Delta^2\tilde \varphi$, of any (consistent and unbiased) estimator,
$\tilde \varphi$, of the parameter is ultimately lower limited by the 
\emph{Quantum Cram\'{e}r-Rao Bound} \citep{Braunstein1994,*Barndorff2000}:
\begin{equation}
\Delta^{2}\tilde\varphi\ge\frac{1}{\nu F_\t{Q}\!\left[\rho_{\varphi}^{N}\right]},
\,\t{where}\,\,
F_\t{Q}\!\left[\rho_{\varphi}^{N}\right]:=\textrm{Tr}\!\left\{\!\rho_{\varphi}^{N}L\!\left[\rho_{\varphi}^{N}\right]^{2}\right\}
\label{eq:QCRB}
\end{equation}
is the \emph{Quantum Fisher Information} (QFI) for a
given $N$-particle state $\rho_{\varphi}^{N}$
with $\varphi$ standing for the true parameter value. $L[\rho_{\varphi}^{N}]$
is the symmetric logarithmic derivative operator unambiguously 
defined via $\partial_{\varphi}\rho_{\varphi}^{N}\!=\!(L[\rho_{\varphi}^{N}]\rho_{\varphi}^{N}\!+\!\rho_{\varphi}^{N}L[\rho_{\varphi}^{N}])/2$
\citep{Helstrom1976}.

In the customary phase sensing protocol of \figref{fig:scheme} the estimated parameter is encoded 
onto the system state $\rho^{N}$ via $\rho_{\varphi}^{N}\!=\!U_{\varphi}^{\otimes N}\rho^{N}U_{\varphi}^{\dagger\otimes N}$
with $U_{\varphi}\!=\!\textrm{e}^{-\textrm{i}h\varphi}$ and $h$ being some fixed single-particle Hamiltonian.
Without loss of generality we assume the operator norm of $h$ to fulfil $\|h\|\!\le\!1/2$, so that the single-particle QFI 
generally satisfies $F_\t{Q}[\rho_{\varphi}^{1}]\!\le\!1$. As the parameter encoding is unitary, in what follows we 
may write the QFI for given $h$ as $F_\t{Q}[\rho^{N}]\!:=\!F_\t{Q}[\rho_{\varphi}^{N}]$ 
manifesting its independence of $\varphi$ \citep{Toth2014,*Demkowicz2015}.
Moreover, as the QFI is additive and convex \citep{Toth2014,*Demkowicz2015}, 
it must then fulfil $F_\t{Q}[\rho^N_\t{sep}]\!\le\!N$ for any separable $\rho^N_\t{sep}$. 
Hence, this proves that the SQL can be surpassed indeed \emph{only} when the quantum state 
$\rho^{N}$ exhibits entanglement \citep{Pezze2009}.

\subsection{Entanglement quantifiers}
In order to quantify the \emph{relative size} of entanglement contained in a given $\rho^N$
of \figref{fig:scheme}, we use the notion of \emph{producibility} \citep{Guhne2005,*Guhne2006}
(also termed \emph{entanglement depth} \citep{Sorensen2001}).
An $N$-particle pure state is termed \emph{$k$-producible} if it can 
be written as $\ket{\psi^{N}}=\otimes_{m=1}^{M\le N}\ket{\psi_{m}}$
with each $\ket{\psi_{m}}$ consisting of \emph{at most} $k$ particles.
This directly extends to mixed states:~a mixed state
$\rho^N$ is $k$-producible if it is a convex combination of pure 
$k$-producible states \citep{Guhne2005,*Guhne2006}.
Hence, for an $N$-particle Hilbert space $\mathcal{H}^{\otimes N}$,
the \emph{convex} sets of all $k$-producible states, which
we denote by $\mathcal{S}_k^N$, form a hierarchy
[$\mathcal{S}^N_1\!\subset\!\mathcal{S}^N_2\!\subset\!\dots\!\subset\!\mathcal{S}^N_N$]
that we schematically depict in \figref{fig:sets}. Note that $\mathcal{S}^N_1$ is just the set of fully separable states, 
$\mathcal{S}^N_N$ is the set of all states acting on $\mathcal{H}^{\otimes N}$,
while $\mathcal{S}^N_N\!\setminus\!\mathcal{S}^N_{N-1}$ contains ones that are 
genuinely entangled---they do not admit any form 
of separability. Crucially, the concept of producibility allows us to 
define for any $N$-particle state $\rho^{N}_{(l)}$ that is 
$l$-producible but \emph{not} $(l\!-\!1)$-producible, i.e., 
$\rho^{N}_{(l)}\in\mathcal{S}^N_{(l)}\!:=\!\mathcal{S}^N_{l}\!\setminus\!\mathcal{S}^N_{l-1}$,
the \emph{relative size of Largest Entangled Block (LEB)} of particles as $R_\t{LEB}\!:=\!l/N$.
Thus, $R_\t{LEB}$ is the ratio of the size of the largest subgroup of particles
that are entangled to the total particle number.
When describing the precision scaling attained by
metrology protocols, one deals with 
the $N\!\to\!\infty$ limit. Note that in such an asymptotic regime
the LEB may be divergent 
despite $R_\t{LEB}$ vanishing with $N$. 
Hence, if one was to associate the entanglement size with the number of 
particles being entangled via LEB, in many situations it would be infinite for
$N\!\to\!\infty$. In contrast, $R_\t{LEB}$ adequately then takes 
values within the interval $[0,1]$ depending on the sequence 
of states considered.

\begin{figure}[!t]
\includegraphics[width=0.9\columnwidth]{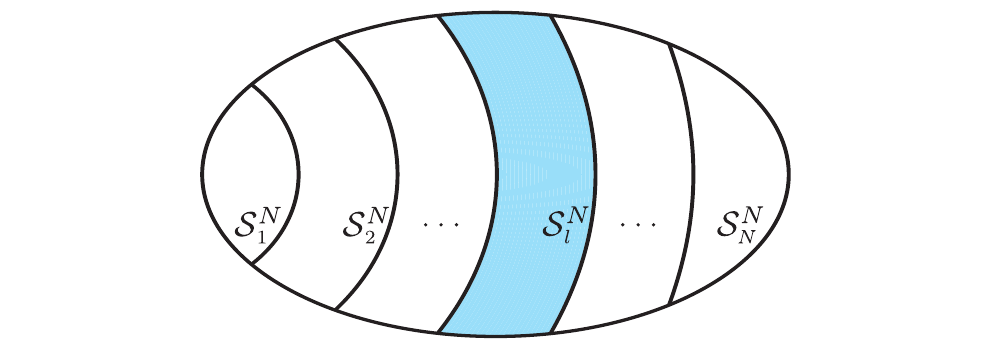}
\caption{%
\textbf{Hierarchy of convex sets $\mathcal{S}_k^N$ containing all $k$-producibile states.}
The set $\mathcal{S}_N^N$ contains all states acting on $\mathcal{H}^{\otimes N}$, while $\mathcal{S}_1^N$ is 
its subset of separable states. Shaded region is the set $\mathcal{S}_l^N\!\setminus\!\mathcal{S}_{l-1}^N$ of
states with the 
relative size of largest entangled block:~$R_\t{LEB}=l/N$.}
\label{fig:sets}
\end{figure}

On the other hand, in order to quantify the amount of entanglement exhibited by 
$\rho^N$ in \figref{fig:scheme}, we employ the \emph{geometric measure of entanglement} (GME)
that is defined for pure states as
$E_\t{G}[\ket{\psi^N}]\!:=\!1\!-\!\max_{\ket{\phi^N} \in \mathcal{S}_1^N}|\langle\phi^N|\psi^N\rangle|^2$
\citep{Shimony1995,*Wei2003}. Its definition naturally generalises to mixed states 
through the convex-roof construction \citep{Shimony1995,*Wei2003}:
\begin{equation}
E_\t{G}[\rho^N]
\;:=
\inf_{\{p_i,\ket{\psi_i^N}\}}\sum_i p_i\, E_\t{G}[\ket{\psi_i^N}]
\label{eq:E_G}
\end{equation}
with the infimum taken over all ensembles $\{p_i,\ket{\psi_i^N}\}$ such that $\rho^N\!=\!\sum_ip_i\proj{\psi_i^N}$.
However, one may show that definition \eref{eq:E_G} may be equivalently obtained by 
employing the Uhlmann fidelity, $F(\rho,\sigma)\!:=\!\Tr\sqrt{\sqrt\sigma\rho\sqrt\sigma}$ \citep{Bengtsson2006},
so that $E_{G}[\rho^{N}]\!=\!1\!-\!\max_{\sigma^{N}\in\mathcal{S}_{1}^{N}}F^2(\rho^{N},\sigma^{N})$
\cite{Streltsov2010}. Crucially, thanks to its geometrical formulation, the GME is independent of the 
particle number $N$. In particular, it effectively measures the distance to separable
states independently of the Hilbert space dimension. To see this, note 
that the GME obeys the following inequality:
\begin{equation}
1-\sqrt{1-E_{G}(\rho^{N})}
\leq
\min_{\sigma^{N}\in\mathcal{S}_{1}^{N}} T(\rho^{N},\sigma^{N})
\leq
\sqrt{E_{G}(\rho^{N})},\label{eq:faithful}
\end{equation}
which directly follows from the Fuchs-van de Graaf relation
between fidelity and trace-distance, $T(\rho,\sigma)\!:=\!\|\rho-\sigma\|_{1}/2$, of quantum states \cite{Fuchs1999}. 
In particular, for any sequence of states $\{\rho^{N}\}$
that is bounded away from the set of fully separable states, in the sense that
$\min_{\sigma^{N}\in\mathcal{S}_{1}^{N}} T(\rho^{N},\sigma^{N})\!\geq\!c$
with some constant $c\!>\!0$, the GME is also bounded away from zero
as $E_{G}(\rho^{N})\!\geq\!c^{2}$. On the other hand, the vanishing GME of a sequence $\{\rho^N\}$
implies that its elements must converge to the set of separable states as $N\!\to\!\infty$.
As a result, the GME may have been used to demonstrate, e.g., that typical states---%
by exhibiting high GME---have high entanglement \cite{Gross2009}.

\section{Results}
%
%
\subsection{Continuity of QFI}%
Our first result is the continuity relation for the QFI, which
will later allow us to naturally connect the GME of a state with its metrological properties. 
More precisely, exploiting the \emph{purifications-based definition} of QFI \citep{Fujiwara2008,Escher2011},
we upper bound in \appref{app:QFI_cont} the difference of QFIs for any two quantum states 
via their geometrical separation, in particular, via their \textit{fidelity}, \emph{trace} or \emph{Bures distance}. In the special case of one of the states being pure, we 
additionally tighten the corresponding bound utilising the \emph{convex--roof-based definition} of QFI \citep{Toth2013,*Yu2013}.
The result may be summarised by the following inequality holding for
any two $\rho^N\!,\sigma^N\!\in\!\mathcal{B}(\mathcal{H}^{\otimes N})$ (see \appref{app:QFI_cont} for the proof):
\begin{equation}
\left|F_\t{Q}\!\left[\rho^{N}\right]-F_\t{Q}\!\left[\sigma^{N}\right]\right|
\;\le\;
\xi\, \sqrt{1-F\!\left(\rho^N,\sigma^N\right)^2}\, N^{2},
\label{eq:cont}
\end{equation}
where $F(\rho,\sigma)$ is again the Uhlmann fidelity, while $\xi\!=\!8$ for general quantum 
states and $\xi\!=\!6$ if one of them is pure.

Let us first stress the general power of the continuity relation \eref{eq:cont} when 
used for comparing metrological properties between multipartite states.
It straightforwardly follows from \eqnref{eq:cont} that for any pair $\rho^N\!,\sigma^N$ (see also \appref{app:QFI_cont}):
\begin{equation}
\label{eq:cont2}
F_\t{Q}\!\left[\rho^{N}\right]
\;\le\; 
F_\t{Q}\!\left[\sigma^{N}\right]+
\xi\, \sqrt{2\;T(\rho^N,\sigma^N)}\, N^{2}
\end{equation}
with $T(\rho,\sigma)$ denoting the trace distance as before. 
Hence, \eqnref{eq:cont2} directly implies that given a sequence of states 
$\{\sigma^N\}$ that \emph{do not attain} the HL,
i.e., $F_\t{Q}[\sigma^N]/N^2\!\to\! 0$ with $N$, 
\emph{any} other sequence $\{\rho^N\}$ which consists 
of states those successively converge to $\{\sigma^N\}$, so 
that $T(\rho^N,\sigma^N)\!\to\!0$ with $N$, \emph{cannot} attain the HL either. 
In particular, recalling that separable states do not allow for any quantum-enhanced sensitivity, 
no sequence of states tending to the set of fully separable states $\mathcal{S}_1^N$ may 
attain the HL. On the other hand, taking in contrast $\{\rho^N\}$ in \eqnref{eq:cont2} 
as the reference sequence that \emph{attains} the HL, i.e.,
$F_\t{Q}[\rho^N]/N^2\!\to\!c\!>\!0$  with $N$,
\eqnref{eq:cont2} proves that \emph{any} other sequence 
$\{\sigma^N\}$ \emph{must also} attain the HL, as long as 
$\xi \sqrt{2\,T(\rho^N,\sigma^N)}\to c'\!<\!c$ while $N\!\to\!\infty$.
For instance, choosing as reference the optimal GHZ states, which yield
$F_\t{Q}[\psi_\t{GHZ}^N]\!=\!N^2$ and, hence, $c\!=\!1$ ($\xi\!=\!6$), 
we see that any other sequence of states $\{\sigma^N\}$ that maintain 
their $T(\psi_\t{GHZ}^N,\sigma^N)\!<\!1/72$ with $N$ must also 
follow the HL. This is consistent with recent profound methods which
focus on Dicke-state sequences and imply
$T(\psi_\t{GHZ}^N,\sigma^N)\!<\!\sqrt{3}/2$ to be sufficient 
in case of GHZ states \cite{Apellaniz2015}.

Surprisingly, \eqnref{eq:cont2} opens an interesting possibility:~there 
may exist two sequences of states which asymptotically converge
despite contrasting metrological properties.
Consider a sequence of states $\{ \sigma^{N}\}$ such that
$F_\t{Q} [\sigma^{N}]\!\sim\!N^{\alpha}$ for sufficiently large $N$,
yielding a $1/N^{\alpha}$ asymptotic resolution with $0\!<\!\alpha\!<\!2$ 
(possibly even sub-SQL). \eqnref{eq:cont2} does not exclude
the existence of another sequence $\{\rho^{N}\}$
with $T(\rho^N,\sigma^N)\!\to\! 0$ for $N\!\to\! \infty$
that nonetheless attains \emph{any} improved precision scaling
$1/N^{2-\varepsilon}$ with $0\!<\!\varepsilon\!<\!2\!-\!\alpha$.
All what \eqnref{eq:cont2} imposes is that
$\{\rho^{N}\}$ approaches $\{\sigma^{N}\}$ slow enough, so that
$T(\rho^N,\sigma^N)\!\gtrsim\!1/N^{2\varepsilon}$ as $N\!\to\!\infty$.
In the context of entanglement, there may thus exist sequences 
approaching the set of \emph{separable} states 
 but preserving precision scaling arbitrarily close to HL.
We later provide examples of such sequences.

\subsection{Relating QFI to entanglement}%
We first recall the result of
\cite{Toth2012,*Hyllus2012} relating the notions of QFI and $k$-producibility:~for 
any $k$-producible state $\sigma^{N}\!\in\!\mathcal{S}_k^N$, the QFI is upper bounded as
\begin{equation}
F_\t{Q}[\sigma^{N}]\;\le\;\left\lfloor \frac{N}{k}\right\rfloor k^{2}+\left(N-\left\lfloor \frac{N}{k}\right\rfloor k\right)^{2}\le \;k\,N,
\label{eq:k-prod_bound}
\end{equation}
where $\lfloor x\rfloor\!:=\!\t{floor}[x]$. The above bound importantly implies 
that for states with fixed producibility $k$  (independent of $N$) the quantum 
enhancement is limited to a constant factor \cite{Toth2012,*Hyllus2012}. 
Hence, for a super-classical precision scaling to be possible the 
preparation map $\Lambda^N$ in \figref{fig:scheme} must output states such 
that their producibility constantly rises with increasing $N$. 

On the other hand, in terms of $R_\t{LEB}$, \eqnref{eq:k-prod_bound} equivalently 
reads:~$F_\t{Q}[\sigma^{N}]\le R_\t{LEB}N^2$. Thus, the \emph{exact} HL can be 
attained \emph{only if} $R_\t{LEB}$ does not vanish in the asymptotic $N$ limit, what 
requires the relative size of entanglement to be maintained with increasing $N$. However, 
similarly to the continuity relation \eref{eq:cont2}, \eqnref{eq:k-prod_bound} leaves open the 
existence of sequences attaining scalings \emph{arbitrarily close} to HL despite their $R_\t{LEB}$ 
tending to zero with $N$ (it requires the size of the particle LEB to grow as $N^{1-\varepsilon}$, letting 
$R_\t{LEB}$ vanish as $N^{-\varepsilon}$). Operationally, in order to reach a super-classical scaling, 
it is thus enough for the $\Lambda^N$ of \figref{fig:scheme} to prepare states with the effective 
number of entangled particles rising with $N$, yet at such a rate that its ratio to the total particle 
number is constantly decreasing. One may thus argue that the preparation map $\Lambda^N$ 
of \figref{fig:scheme} is then experimentally easier to implement, as it does not require the relative 
size of entanglement to be maintained with increasing $N$ (e.g., while squeezing an atomic ensemble
\cite{Kitagawa1993}), especially when dealing with systems of macroscopic size
\footnote{%
E.g., in optical interferometry with strong laser beams $N\!\approx\!10^{12}/\t{ns}$
\citep{LIGO2011,*LIGO2013}, while in atomic-ensemble experiments $N\!\gtrapprox\!10^5$
\citep{Wasilewski2010,*Sewell2012,Appel2009,*Louchet2010}.
}.

Let us now provide the second main result relating
the QFI and the GME.
To this end, we show that \eqnref{eq:cont} (with $\xi\!=\!6$)
may be utilised to upper bound the QFI as (see \appref{app:geom_meas} for the proof):
\begin{equation}
F_\t{Q}\!\left[\rho^{N}\right]\;\le\; N+6\,\sqrt{E_\t{G}\!\left[\rho^{N}\right]}\,N^{2}.
\label{eq:geom_meas_bound}
\end{equation}
As an aside, note that the formula \eref{eq:geom_meas_bound} may be straightforwardly 
generalised to any $k\!>\!1$ with help of the bound \eref{eq:k-prod_bound}
and by defining the \textit{geometric measure of $k$-producibility} after replacing
$\mathcal{S}_1^N$ with $\mathcal{S}_k^N$ in \eqnref{eq:E_G} (see \appref{app:geom_meas}).

Inequality \eref{eq:geom_meas_bound} implies that the \emph{exact} HL
can \emph{only} be attained if the entanglement is asymptotically nonvanishing, i.e.,
any sequence $\{\rho^N\}$ with GME vanishing for $N\!\to\!\infty$ \emph{cannot} reach the 
$1/N^2$ scaling. Still, \eqnref{eq:geom_meas_bound} does not exclude the possibility that 
any resolution \emph{arbitrarily close} to HL is attained by a sequence 
$\{\rho^N\}$, whose elements exhibit vanishingly small geometric measure of entanglement as $N\to\infty$. 
In particular, \eqnref{eq:geom_meas_bound} just requires the GME to decay 
slowly enough, so that as long as asymptotically $E_\t{G}[\rho^N]\gtrsim1/N^{2\varepsilon}$,
any resolution $1/N^{2-\varepsilon}$ is allowed.

\subsection{Almost the HL with vanishing $R_\t{LEB}$ and GME}%
In order to affirm the above claims, we now provide examples of state sequences---consisting 
of either pure or mixed states---that attain precision scalings arbitrarily close to HL 
despite their relative size and amount of entanglement, as quantified by $R_\t{LEB}$ and GME respectively, 
vanishing with $N\!\to\!\infty$. We return to the phase sensing scenario of \figref{fig:scheme} with 
the parameter $\varphi$ being unitarily encoded via the single-particle Hamiltonian $h\!=\!\sigma_z/2$.

First, let us consider non-maximally entangled states:
\begin{equation}
\ket{\psi^N_p}\;:=\;\sqrt{p}\ket{0}^{\ot N}+\sqrt{1-p}\ket{1}^{\ot N}
\label{eq:non-max_ent}
\end{equation}
with $0\!\le\!p\!\le\!1/2$, so that $\ket{\psi^N_{1/2}}$ is 
the GHZ state of $N$ qubits. The metrological capabilities
of states \eref{eq:non-max_ent} were studied in \refcite{Hyllus2010a},
where it was shown that by making $p$ vanish quickly enough with $N$,
states \eref{eq:non-max_ent} do not surpass SQL despite being genuinely 
entangled for any $p\!>\!0$. On the contrary, we focus on the fact that 
states \eref{eq:non-max_ent} also allow for resolutions arbitrarily close 
to HL even when $p\!\to\!0$ as $N\!\to\!\infty$. However, in order to also control and 
vary their LEB, we tailor them to $\ket{\psi^N_{p,l}}\!:=\!\ket{\psi^l_p}\otimes\ket{0}^{\otimes N-l}$,
so that their $R_\t{LEB}\!=\!l/N$ for any $p\!>\!0$.
As $E_\t{G}[\ket{\psi_{p,l}^N}]\!=\!p$ \citep{Wei2003}, we may then
rewrite their QFI as
$F_\t{Q}[\ket{\psi_{p,l}^N}]\!=\!4p(1-p)l^2\!=\!4 E_\t{G}(1-E_\t{G})R_\t{LEB}^2 N^2$.
Thus, by setting both $E_\t{G}=1/N^{\varepsilon_1}$ and $R_\t{LEB}=1/N^{\varepsilon_2}$
to vanish with $N$ for any $\varepsilon_1,\varepsilon_2\!>\!0$, we obtain the QFI 
to scale as $F_\t{Q}[\ket{\psi^N_{p,l}}]\sim N^{2-\varepsilon_1-2\varepsilon_2}$, 
which in turn yields the desired arbitrariy close to HL resolution $1/N^{2-\varepsilon_1-2\varepsilon_2}$.

Now, let us turn to the case of mixed states and consider
$N$-qubit Werner-type states \cite{Werner1989}:
\begin{equation}
\rho^N_p=p\proj{\psi_{1/2}^N}+(1-p)\frac{\mathbbm{1}_{2^N}}{2^N}.
\label{eq:GHZmixture}
\end{equation}
The QFI of $\rho^N_p$ reads $F_\t{Q}[\rho_p^N]\!=\!N^2p^2/[p\!+\!(1\!-\!p)/2^{N-1}]$ \cite{Toth2014},
and for sufficiently large $N$ simplifies to $p N^2$ (independently whether $p$ depends on $N$). 
Although the GME can be exactly evaluated for these states \cite{GHZStates},
for our purposes it is enough to use the upper bound $E_\t{G}[\rho_p^N]\leq p/2$, which stems
from the convexity of GME and may be shown to be saturated for
$N\!\to\!\infty$ (see \appref{app:EG_est}). Thus, for sufficiently large $N$ we may
write  $F_\t{Q}[\rho_p^N]\!\ge\! 2 E_\t{G} N^2$.
Note that by setting $p=1/N^{\varepsilon}$, leading to $E_\t{G}[\rho_p^N]\!\le\! 1/(2N^{\varepsilon})$, we
actually let the white noise increase with $N$, so that
the state \eref{eq:GHZmixture} becomes fully depolarised in the asymptotic $N$ limit. 
Nevertheless, 
the QFI scales then at least as $F_\t{Q}[\rho_p^N]\gtrsim N^{2-\varepsilon}$
leading to the claimed $1/N^{2-\varepsilon}$ resolution.
Moreover, it has been proven that for states \eref{eq:GHZmixture} to be genuinely entangled 
$p\!>\!(2^{N-1}\!-\!1)/(2^{N}\!-\!1)$ \cite{Guhne2010}, which for large $N$ converges to $1/2$ from 
above. Hence, by letting $p\!\to\!0$ as $N\!\to\!\infty$ we obtain a sequence of states 
that quickly seize to be genuinely entangled with strictly $R_\t{LEB}\!<\!1$. However, in order 
to prove that $R_\t{LEB}$ can be made vanishing, similar to the pure states
case, we tailor the states \eref{eq:GHZmixture} accordingly to
\begin{equation}
\rho_{p,l}^N\;:=\;p\proj{\psi_{1/2,l}^N}+(1-p)\frac{\mathbbm{1}_{2^N}}{2^N},
\label{eq:GHZmixture_l}
\end{equation}
so that $R_\t{LEB}\!\le\!l/N$ may be assured. Following 
the same argumentation as in the case of \eqnref{eq:GHZmixture}
\cite{Toth2014}, the QFI of states \eref{eq:GHZmixture_l} can then be shown to 
simplify to $F_\t{Q}[\rho_{p,l}^N]\!\approx\!p l^2$ for sufficiently large $N$.
Hence, as the GME has to still obey $E_\t{G}[\rho_{p,l}^N]\!\leq\! p/2$,
the QFI of states \eref{eq:GHZmixture_l} must asymptotically scale at least as
$F_\t{Q}[\rho_{p,l}^N]\!\gtrsim\!2 E_\t{G} R_\t{LEB}^2 N^2$.
Thus, as desired, also the mixed states  \eref{eq:GHZmixture_l} allow us to set
both $E_\t{G}\!=\!1/N^{\varepsilon_1}$ and $R_\t{LEB}\!=\!1/N^{\varepsilon_2}$
vanishing, but still attain the $1/N^{2-\varepsilon_1-2\varepsilon_2}$ resolution
despite becoming completely depolarised in the asymptotic $N$ limit.

Although the above pure- and mixed-state sequences 
demonstrate that, indeed, both the GME and $R_\t{LEB}$
may be set vanishing as $N\!\to\!\infty$, while maintaining the arbitrarily close to HL
resolutions, the exemplary sequences do not asymptotically saturate 
the bounds on the QFI set by \eqnsref{eq:k-prod_bound}{eq:geom_meas_bound}.
In the latter case, we expect this to be a consequence of the QFI continuity
relation \eref{eq:cont} actually not being asymptotically saturable due
to the ``square-root'' dependence on the distance between quantum states
appearing in its form. 

To put our results on firm ground, let us assume that one wants to
attain a super-classical resolution that is close to HL, e.g., $1/N^{1.7}$. Then,
in the case of pure \eref{eq:non-max_ent} and mixed \eref{eq:GHZmixture_l} states 
it may be reached after letting both the $R_\t{LEB}$ and GME vanish with $\varepsilon_1\!=\!\varepsilon_2\!=\!0.1$.
Hence, when considering the large-particle-number regime of $N\!\approx\!10^6$ (typical to atomic-ensemble experiments 
\citep{Wasilewski2010,*Sewell2012,Appel2009,*Louchet2010}), one requires $E_\t{G}\!\approx\!0.25$, 
which is half the entanglement of the GHZ state, and $R_\t{LEB}\!\approx\!25\%$, that is, one-fourth 
of particles need to be entangled.

\begin{figure}[!t]
\includegraphics[width=1\columnwidth]{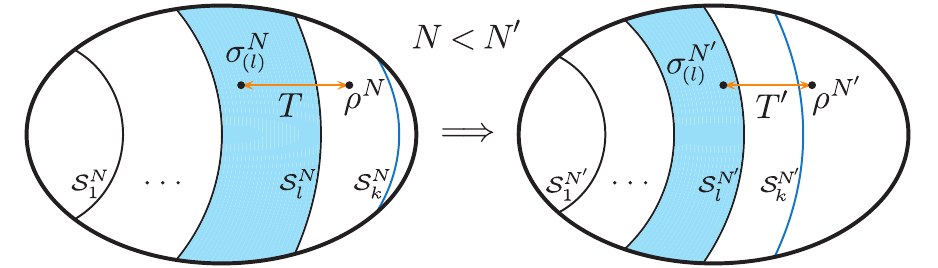}
\caption{%
\textbf{Collapse of the hierarchy of sets} $\mathcal{S}_k^N$
allowing two sequences $\{\sigma_{(l)}^N\}$ and $\{\rho^N\}$
to approach one another with $N$, despite their contrasting
metrological properties. States $\sigma_{(l)}^N$ are of constant $\t{LEB}\!=\!l$, which
(due to \eqnref{eq:k-prod_bound}) constrains their QFI to scale at most linearly with $N$:~ 
$F_{\mathrm{Q}}[\sigma_{(l)}^N]\!\le\!lN$. 
$\rho^N$ are states with their LEB rising with $N$, so that their QFI is taken to scale
as $N^{\alpha}$ with some $\alpha\!<\!2$. Still, it is possible to choose the states 
$\sigma_{(l)}^N$ and $\rho^N$ (e.g., ones of \eqnref{eq:GHZmixture_l}) so that for 
sufficiently large $N$ they become arbitrarily close to each other.  In particular, for the two 
$N'\!>\!N$ depicted above, both $\sigma_{(l)}^{N}\!\in\!\mathcal{S}_{(l)}^{N}$ and $\sigma_{(l)}^{N'}\!\in\!\mathcal{S}_{(l)}^{N'}$, whereas
$\rho^{N}\!\in\!\mathcal{S}_k^N$ but $\rho^{N'}\!\notin\!\mathcal{S}_k^{N'}$ for some $k>l$,
although the trace distance ($T'\!<\!T$) between $\sigma_{(l)}^N$ and $\rho^N$ decreases with $N$. 
This is possible because the sets $\mathcal{S}_l^N$ collapse faster than $\rho^N$
approach $\sigma_{(l)}^N$. 
}
\label{fig:seq_states}
\end{figure}

\subsection{Geometric interpretation of the results}%
In \figref{fig:seq_states}, we schematically present an exemplary path
that elements of sequences $\{\sigma_{(l)}^{N}\}$ and $\{\rho^{N}\}$ 
should take for the above described phenomenon to be possible:~%
despite becoming arbitrarily close to each other as $N\!\to\!\infty$, the states 
$\{\sigma_{(l)}^{N}\}$ and $\{\rho^{N}\}$ have drastically
different metrological properties. To be more precise, let the states 
$\sigma_{(l)}^{N}\!\in\!\mathcal{S}^{N}_{(l)}$ 
be of constant $\t{LEB}\!=\!l$ for all $N$. According to \eqnref{eq:k-prod_bound},
their QFI is thus always constrained by $lN$, so that they only may
yield an SQL-like precision scaling. On the other hand, 
let $\rho^N$ be states whose LEB grows with $N$ in a way
that they attain an asymptotic precision scaling arbitrarily close to HL.
Crucially, the two sequences exhibit highly contrasting metrological 
properties in the asymptotic $N$ limit. Still, it is possible to 
choose them in a way that the geometric distance between their 
consecutive elements gradually vanishes as $N\!\to\!\infty$.
As shown in \figref{fig:seq_states}, this is possible as the elements of 
$\{\rho^{N}\}$ are constantly ``overtaken'' by the 
boundaries of sets of higher producibility, while the hierarchy collapses with increasing $N$. 
We explicitly draw the boundaries of the $k$-producible sets for particle 
numbers $N\!<\!N'$, in order to emphasize that our results suggest rapid shrinkage of the sets with $N$.
In particular, note that in \figref{fig:seq_states}:~$\rho^N\!\in\!\mathcal{S}_k^N$ but $\rho^{N'}\!\notin\!\mathcal{S}_k^N$; 
even though $T'\!=\!T(\rho^{N'},\sigma_{(l)}^{N'})<T\!=\!T(\rho^{N},\sigma_{(l)}^{N})$.
We expect such a phenomenon to be the consequence of the volume of each $\mathcal{S}_k^N$ collapsing
exponentially with $N$, which, according to our best knowledge has only been proven 
for the set of separable states (i.e., for $l=1$) \citep{Gurvits2005,*Szarek2005}.

\section{Conclusion}%
We have studied restrictions that entanglement features 
impose on the asymptotic metrological performance of quantum states. 
First, by establishing the continuity relation for the QFI, we have related the 
metrological properties of states to their underlaying geometry. This allowed us to naturally link their 
metrological utility to their entanglement content as measured by the geometric measure of entanglement.
As a result, we have shown that for the HL to be attained in the asymptotic $N$ limit 
both the relative size and amount of entanglement (as quantified by $R_\t{LEB}$ 
and GME respectively) cannot vanish. For instance, the states that exhibit undistillable entanglement, 
but still attain the exact HL \cite{Czekaj2015}, must thus asymptotically possess finite 
$R_\t{LEB}$ and GME. On the contrary, we have demonstrated that any precision scaling 
arbitrarily close to HL may be reached even though both $R_\t{LEB}$ and GME 
vanish as $N\!\to\!\infty$. In the presence of global depolarisation, 
this still allows the decoherence strength to be increasing with $N$, which contrasts the case of 
uncorrelated noise-types whose strength must decrease with system size for a scaling 
quantum-enhancement to be observed \cite{Fujiwara2008,Escher2011,Demkowicz2012,Knysh2014}.
As uncorrelated noises yield $R_\t{LEB}\!\sim\!1/N$ in the asymptotic $N$ limit \cite{Jarzyna2014}, 
our results provide a new (entanglement-degrading) interpretation of their destructive impact. 
We hope that our work can thus be beneficial in proposing novel noise-robust metrology 
schemes that attain super-classical resolutions by employing quantum states with just the 
necessary entanglement properties. As the metrological usefulness of a quantum state is 
directly related to its \emph{macroscopicity} \cite{Froewis2012a,*Tichy2015}, all our 
results also apply in this context. Let us finally notice that our continuity of the QFI 
has recently been used by some of us to study the typical metrological properties of
various ensembles of quantum states \cite{Oszmaniec2016}.

A natural open question to ask is how the conclusions of our work
vary if one considers single-shot ($\nu\!=\!1$ in \figref{fig:scheme}) metrology protocols, in which the estimated 
parameter may not be assumed to be fluctuating around a known value
and the Bayesian approach to estimation must be pursued \cite{Berry2000,*Bagan2001b}.
Yet, we expect the requirements on the entanglement to be then much more stringent 
due to the lack of sufficient statistics.

\begin{acknowledgments}
%
We acknowledge enlightening discussions with Christian Gogolin, Marcin Jarzyna, 
and Paul Skrzypczyk. This work has been supported by the 
ERC AdG OSYRIS and CoG QITBOX, Axa Chair in Quantum Information Science,
John Templeton Foundation, EU IP SIQS, EU STREP QUIC and EQuaM, Spanish Ministry 
National Plans FOQUS (FIS2013-46768) and MINECO (Severo Ochoa Grant No.~SEV-2015-0522), 
Fundaci\'{o} Privada Cellex, Generalitat de Catalunya (Grant No.~SGR 874 and 875),
Foundation for Polish Science (START scholarship) and Alexander von Humboldt Foundation.
 J.K.~and R.A.~further acknowledge funding from the European Union’s Horizon 2020 research 
and innovation programme under the Marie Sk\l{}odowska-Curie Q-METAPP and N-MuQuaS 
Grants No.~655161 and 705109.
\end{acknowledgments}

\appendix

\section{Continuity of QFI}
\label{app:QFI_cont}

Here we present a detailed proof of the continuity relation for the
quantum Fisher information (QFI), which we extensively use in the main text.
Yet, in order to also establish a common notation and preliminary notions,
we firstly introduce the basic concepts of:~a \emph{purification of a mixed
state}, and that of the \emph{Uhlmann fidelity} and the \emph{Bures distance} \cite{Nielsen2000,Bengtsson2006}.

Let us consider a quantum system represented by a mixed state $\rho$
acting on a Hilbert space $\mathcal{H}_{S}=\mathbbm{C}^{d}$. It follows
that $\rho$ can always be represented by a pure state from a larger
Hilbert space. Concretely, there exists $\ket{\psi}\in\mathcal{H}_{S}\ot\mathcal{H}_{E}=\mathbbm{C}^{d}\ot\mathbbm{C}^{d'}$
with $d'=\mathrm{rank}(\rho)$ such that $\rho=\Tr_{E}\proj{\psi}$.
This representation is, however, not unique because any pure state
related to $\ket{\psi}$ via $\ket{\psi'}=\mathbbm{1}_{S}\ot V_{E}\ket{\psi}$
with $V_{E}$ being some partial isometry $(V_{E}^{\dagger}V_{E}=\mathbbm{1})$
is also a \emph{purification} of $\rho$. At this point it is important to
mention that any such $V_{E}$ can be extended to a unitary operation
by properly enlarging the ``environmental'' Hilbert space $\mathcal{H}_{E}$,
and so any two purifications of a given $\rho$ are thus related by a unitary
operation acting on $\mathcal{H}_{E}$ \cite{Nielsen2000}.

Then, the \emph{Uhlmann fidelity} of a pair of density matrices $\rho$ and
$\sigma$ acting on $\mathcal{H}_{S}=\mathbbm{C}^{d}$ is defined
through
\begin{equation}
F(\rho,\sigma):=\left\Vert \sqrt{\rho}\sqrt{\sigma}\right\Vert _{1}=\Tr\sqrt{\sigma^{1/2}\rho\,\sigma^{1/2}},
\label{eq:fidelity}
\end{equation}
where $\| \cdot\|_1$ stands for the trace norm defined as $\|X\|_1=\Tr\sqrt{X^{\dagger}X}$.
If at least one of these two states is pure, say $\sigma=\proj{\phi}$,
then the above formula simplifies to $F(\rho,\ket{\phi})=\sqrt{\langle\phi|\rho|\phi\rangle}$,
and if also $\rho=\proj{\psi}$ then
$F(\ket{\psi},\ket{\phi})=|\braket{\psi}{\phi}|$ is just the overlap of the two pure states.

For further benefits let us also mention that for any pair of mixed
states $\rho$ and $\sigma$ their fidelity can be expressed in terms
of the fidelity of their purifications, denoted by $\ket{\psi}$ and
$\ket{\phi}$, respectively. More precisely,
\begin{equation}
F(\rho,\sigma)=\max_{\ket{\psi}}F(\ket{\psi},\ket{\phi})=\max_{\ket{\psi}}|\langle\psi|\phi\rangle|
\label{eq:fidelityPur}
\end{equation}
where the maximization is performed over all purifications of $\rho$,
but equally well can be performed over the purifications of $\sigma$
\cite{Nielsen2000}. The Uhlmann fidelity does not fulfil properties
of a \emph{measure of distance} between quantum states \cite{Bengtsson2006},
yet with its help one may define the
so-called \textit{Bures distance} \cite{Bengtsson2006}:
%
\begin{equation}
D_{B}(\rho,\sigma)=\sqrt{2\left[1-F(\rho,\sigma)\right]}.
\label{eq:Bures_distance}
\end{equation}

Having these notions at hand, we can now pass to the continuity relations
of the QFI. Let us first recall that in our case the parameter $\varphi$
is encoded on a state with the aid of a unitary operation,
so that $\rho_\varphi=U_\varphi\rho\, U_\varphi^\dagger$
where $U_{\varphi}=\mathrm{e}^{-\mathrm{i} H \varphi}$ and $H$
is a given parameter-encoding Hamiltonian.
In such case, the QFI most generally reads
\begin{equation}
 F_\t{Q}[\rho;H]:=F_\t{Q}[\rho_\varphi]
=
2\sum_{k,l}\frac{(\lambda_{k}-\lambda_{l})^{2}}{\lambda_{k}+\lambda_{l}}|\langle\xi_{k}|H|\xi_{l}\rangle|^{2},
\label{eq:QFIUnit}
\end{equation}
where $\lambda_{k}$ and $\ket{\xi_{k}}$ are respectively the
eigenvalues and the eigenvectors of $\rho$.
Importantly, as emphasised by our notation in the definition \eref{eq:QFIUnit},
owing to the unitary parameter-encoding, the QFI is
independent of the estimated parameter and thus becomes just a function of
the state and the Hamiltonian.

\begin{thm}
\label{thm:ciaglosc}
For any pair of density matrices $\rho$ and $\sigma$
acting on $\mathcal{H}_{S}$ and for the QFI given in \eqnref{eq:QFIUnit}
the following inequalities hold true:
\begin{equation}
\left|F_\t{Q}[\rho;H]-F_\t{Q}[\sigma;H]\right|
\leq
32\,\sqrt{1-F^{2}(\rho,\sigma)}\,\|H\|^{2},\label{ciaglosc1}
\end{equation}
\begin{equation}
\left|F_\t{Q}[\rho;H]-F_\t{Q}[\sigma;H]\right|
\leq
32\, D_{B}(\rho,\sigma)\, \|H\|^{2},\label{ciaglosc2}
\end{equation}
and
\begin{equation}
\left|F_\t{Q}[\rho;H]-F_\t{Q}[\sigma;H]\right|
\leq
32\,\sqrt{\|\rho-\sigma\|_{1}}\,\|H\|^{2},\label{ciaglosc3}
\end{equation}
where $F$ and $D_{B}$ stand for the Uhlmann fidelity and the Bures distance
respectively.
\end{thm}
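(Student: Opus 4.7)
The plan is to first notice that the three inequalities reduce to a single one, and then to prove that one via a purification plus Uhlmann argument. Observe that once the fidelity bound \eref{ciaglosc1} is established, \eref{ciaglosc2} follows from the elementary estimate $\sqrt{1-F^{2}}=\sqrt{(1-F)(1+F)}\leq\sqrt{2(1-F)}=D_{B}$, and \eref{ciaglosc3} follows from the Fuchs--van de Graaf inequality $1-F\leq\|\rho-\sigma\|_{1}/2$ combined with $1+F\leq 2$, which together give $\sqrt{1-F^{2}}\leq\sqrt{\|\rho-\sigma\|_{1}}$. The constant $32$ propagates unchanged through these passages, so the real content lives in \eref{ciaglosc1} and that is where I would focus.

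For \eref{ciaglosc1} my strategy is to exploit the purification-based characterization of the QFI,
\begin{equation*}
F_\t{Q}[\rho;H]\;=\;4\min_{|\Psi_\varphi\rangle}\!\bigl[\langle\partial_\varphi\Psi_\varphi|\partial_\varphi\Psi_\varphi\rangle-|\langle\Psi_\varphi|\partial_\varphi\Psi_\varphi\rangle|^{2}\bigr],
\end{equation*}
with the minimum taken over all purifications of $\rho_\varphi$. Under the unitary encoding $U_\varphi=e^{-iH\varphi}$ every purification has the form $(U_\varphi\otimes V_\varphi)|\Psi\rangle$ for some $\varphi$-dependent environmental unitary $V_\varphi$, and setting $h_{E}=i(\partial_\varphi V_\varphi)V_\varphi^{\dagger}|_{\varphi=0}$ the formula collapses to $F_\t{Q}[\rho;H]=4\min_{|\Psi\rangle,h_{E}}\t{Var}_{|\Psi\rangle}(A)$ with $A:=H\otimes\mathbbm{1}_{E}+\mathbbm{1}_{S}\otimes h_{E}$. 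Let $(|\Psi\rangle,h_{E}^{\star})$ achieve this minimum for $\rho$; by Uhlmann's theorem, cf.~\eref{eq:fidelityPur}, there is a purification $|\Phi\rangle$ of $\sigma$ with $|\langle\Psi|\Phi\rangle|=F(\rho,\sigma)$. Inserting $|\Phi\rangle$ together with the \emph{same} $h_{E}^{\star}$ into the formula for $\sigma$ gives $F_\t{Q}[\sigma;H]\leq 4\,\t{Var}_{|\Phi\rangle}(A)$, and hence $F_\t{Q}[\sigma;H]-F_\t{Q}[\rho;H]\leq 4\bigl|\t{Var}_{|\Psi\rangle}(A)-\t{Var}_{|\Phi\rangle}(A)\bigr|$. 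The reverse direction follows by swapping $\rho\leftrightarrow\sigma$.

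The resulting variance difference is controlled by a standard pure-state continuity estimate: writing $\t{Var}(A)=\langle A^{2}\rangle-\langle A\rangle^{2}$, applying the trace-norm H\"older bound $|\Tr[B(\proj{\Psi}-\proj{\Phi})]|\leq 2\|B\|\sqrt{1-|\langle\Psi|\Phi\rangle|^{2}}$ separately to $B=A^{2}$ and $B=A$, and using $|\langle A\rangle_{\Psi}+\langle A\rangle_{\Phi}|\leq 2\|A\|$ in the resulting difference of squares, yields $|\t{Var}_{|\Psi\rangle}(A)-\t{Var}_{|\Phi\rangle}(A)|\leq 6\|A\|^{2}\sqrt{1-F(\rho,\sigma)^{2}}$.

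The genuine technical obstacle is converting $\|A\|$ into $\|H\|$. A priori the minimizing $h_{E}^{\star}$ could have arbitrarily large norm, so one must argue that the infimum in the purification formula may be restricted without loss of generality to generators with $\|A\|\leq C\|H\|$ for some absolute constant $C$ (exploiting that adding a multiple of $\mathbbm{1}_{E}$ to $h_{E}^{\star}$ leaves the variance invariant, which allows centering of its spectrum). Combining such a norm control with the $6\|A\|^{2}$ estimate and the prefactor $4$ coming from $F_\t{Q}=4\,\t{Var}$ then produces the constant $32$ in \eref{ciaglosc1}. For the sharper bound $\xi=6$ announced in the main text when one of the states is already pure, I would bypass purifications altogether and use instead the convex-roof formulation of the QFI \citep{Toth2013,*Yu2013}: applying the pure-state variance estimate termwise to the optimal pure-state decomposition of the mixed state and invoking concavity of $\sqrt{\cdot}$ (Jensen) converts $\sum_{i}p_{i}\sqrt{1-|\langle\chi|\psi_{i}\rangle|^{2}}$ into $\sqrt{1-F(\rho,\sigma)^{2}}$, while keeping $\|H\|$ (rather than $\|A\|$) in the bound and therefore shrinking the constant.
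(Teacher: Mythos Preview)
Your overall architecture---purification formula, Uhlmann's theorem to align purifications, reuse of the optimal environmental generator $h_E^\star$ as a trial for the other state, and the reductions of \eref{ciaglosc2} and \eref{ciaglosc3} to \eref{ciaglosc1}---coincides with the paper's. But the step you yourself flag as ``the genuine technical obstacle,'' namely $\|A\|\le C\|H\|$, is not resolved by what you propose. Shifting $h_E^\star$ by a multiple of $\mathbbm{1}_E$ centres its spectrum (or sets $\langle A\rangle_\Psi=0$), yet a traceless operator can still have arbitrarily large operator norm; nothing in the variational characterisation by itself forces the minimising $h_E^\star$ to be small. The paper closes this gap by \emph{explicitly solving} the minimisation: writing $\sigma=\sum_i\mu_i\proj{\eta_i}$ and using its canonical purification, one obtains $(h_E^\sigma)_{ij}=-\tfrac{2\sqrt{\mu_i\mu_j}}{\mu_i+\mu_j}\langle\eta_j|H|\eta_i\rangle$, and then proves $\|h_E^\sigma\|\le\|H\|$ via the integral identity $\tfrac{2\sqrt{\mu_i\mu_j}}{\mu_i+\mu_j}=2\sqrt{\mu_i\mu_j}\int_0^\infty e^{-t(\mu_i+\mu_j)}\,\mathrm{d}t$, which rewrites $|\langle\omega|h_E^\sigma|\omega\rangle|$ as $2\bigl|\int_0^\infty\langle\eta(t)|H|\eta(t)\rangle\,\mathrm{d}t\bigr|\le\|H\|$ for a one-parameter family with $\int_0^\infty\|\eta(t)\|^2\,\mathrm{d}t=\tfrac12$. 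This computation is the technical heart of the theorem and is absent from your sketch.

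There is also a constant mismatch. The paper works with the \emph{second-moment} form $F_\t{Q}=4\min_{h_E}\langle\psi|(H+h_E)^2|\psi\rangle$ (no variance subtraction), so the relevant difference is $4\,\Tr[(\psi_\rho-\psi_\sigma)A^2]\le 8\sqrt{1-F^2}\,\|A\|^2$, and with $\|A\|^2\le(\|H\|+\|h_E^\sigma\|)^2\le 4\|H\|^2$ this gives exactly $32$. Your variance route carries the additional $\langle A\rangle^2$ contribution and produces $24\,\|A\|^2\sqrt{1-F^2}$; combined with $\|A\|\le 2\|H\|$ that yields $96$, not $32$. (One can in fact rescue the factor $8$ inside the variance framework by observing that at the minimiser $\langle A\rangle_\Psi=0$, so in the one-sided inequality the term $-\langle A\rangle_\Phi^2\le 0$ may simply be dropped---but that is not the argument you presented.)
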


\begin{proof}The key ingredient of our proof is the fact that the QFI can
generally (\emph{not only} for unitary encodings) be expressed as
\begin{equation}
\left.F_\t{Q}[\rho_{\varphi}]\right|_{\varphi=\varphi_{0}}=4\min_{\ket{\psi_{\varphi}}}\langle\dot{\psi_{\varphi}}|\dot{\psi_{\varphi}}\rangle\Big|_{\varphi=\varphi_{0}},
\label{eq:QFI_purif}
\end{equation}
where $\ket{\dot{\psi_{\varphi}}}=\mathrm{d}\ket{\psi_{\varphi}}/\mathrm{d}\varphi$
and the minimization is in principle performed over all purifications
$\ket{\psi_{\varphi}}\in\mathcal{H}_{S}\ot\mathcal{H}_{E}$ of $\rho_{\varphi}$
for a given parameter true value $\varphi_0$
\cite{Fujiwara2008,Kolodynski2013,*Kolodynski2014}.

It turns out, however, that in this minimization it is enough to
consider only a family of purifications of $\rho_{\varphi}$ valid at $\varphi_0$ given by
\begin{equation}
\ket{\psi_{\varphi}}=\mathrm{e}^{-\mathrm{i}h_{E}(\varphi-\varphi_{0})}\ket{\widetilde{\psi}_{\varphi}},
\label{eq:gen_purif}
\end{equation}
where $\ket{\widetilde{\psi}_{\varphi}}$ is some fixed purification
of $\rho_{\varphi}$ and $h_{E}$ is any Hermitian operator acting
on the ancillary subsystem $\mathcal{H}_{E}$ (notice that $h_{E}$
is independent of $\varphi$) \cite{Kolodynski2013,*Kolodynski2014}. Moreover,
in our case, i.e., when the quantum evolution encoding the parameter
$\varphi$ is unitary, any purification of $\rho_{\varphi}$ takes
the form
\begin{equation}
\ket{\widetilde{\psi}_{\varphi}}=U_{\varphi}\otimes\mathbbm{1}_E\,\ket{\psi},
\label{eq:purif_rho_phi}
\end{equation}
for some purification $\ket{\psi}$ of $\rho$.

Now, by substituting \eqnsref{eq:gen_purif}{eq:purif_rho_phi} into
\eqnref{eq:QFI_purif} one obtains an equivalent expression for QFI given
by
\begin{equation}
F_\t{Q}[\rho;H]=4\min_{h_{E}}\langle\psi|(H+h_{E})^{2}|\psi\rangle,
\label{eq:QFI_purif_unit}
\end{equation}
in which:~$H+h_{E}=H\ot\mathbbm{1}_{E}+\mathbbm{1}_{S}\ot h_{E}$,
$H$ is the parameter-encoding Hamiltonian considered (acting on the system), and the
minimization is performed over all Hermitian operators $h_{E}$ acting on the environment.
It should be noticed that, in agreement with
definition \eref{eq:QFIUnit}, formula \eref{eq:QFI_purif_unit} no longer depends
on the parameter $\varphi$.

What is more, having the purification-based QFI
definition \eref{eq:QFI_purif_unit} for a unitary encoding at hand, we may explicitly
construct the optimal $h_E$ for a given Hamiltonian $H$ and a state $\rho=\sum_i\lambda_i\ket{\xi_i}\!\bra{\xi_i}$.
In particular, we may assume that the fixed purification of $\rho$ appearing
in \eqnref{eq:QFI_purif_unit} is the canonical one generated by the eigensystem of
$\rho$, that is,
\begin{equation}
\ket{\psi}=\sum_{i}\sqrt{\lambda_{i}}\ket{\xi_{i}}\ket{i}.
\end{equation}
Moreover, denoting by $h_{ij}$ the entries of $h_{E}$ in the standard basis of $\mathcal{H}_{E}$,
let us define the following multivariable function
\begin{eqnarray}
f(\{h_{ij}\}) & = & \langle\psi|(H+h_{E})^{2}|\psi\rangle\nonumber \\
 & = & \sum_{i}\lambda_{i}\langle\xi_{i}|H^{2}|\xi_{i}\rangle+\sum_{ij}\lambda_{i}h_{ij}h_{ji}\nonumber \\
 &  & +\;2\sum_{ij}\sqrt{\lambda_{i}\lambda_{j}}\langle\xi_{i}|H|\xi_{j}\rangle h_{ij}.
\end{eqnarray}
The necessary condition that this function has a minimum at some $h_E$
is that its derivatives over all $h_{ij}$ vanish at $h_E$. This gives
us the following system of equations
\begin{equation}
2\sqrt{\lambda_{i}\lambda_{j}}\langle\xi_{i}|H|\xi_{j}\rangle+(\lambda_{i}+\lambda_{j})h_{ji}=0
\end{equation}
 which implies that
\begin{equation}
h_{ij}=-\frac{2\sqrt{\lambda_{i}\lambda_{j}}}{{\lambda_{i}+\lambda_{j}}}\langle\xi_{j}|H|\xi_{i}\rangle.
\label{eq:hE_opt}
\end{equation}
The resulting matrix $h$ is clearly Hermitian. Moreover, due to the fact that the function $f$ is convex
[which follows from convexity of the square function (\cite[p. 113]{Bhatia2013})], the above solution
corresponds to its global minimum.

Now, stemming from the QFI definition \eref{eq:QFI_purif_unit} and the optimal
form of $h_E$ \eref{eq:hE_opt},
we prove the first QFI continuity relation \eref{ciaglosc1}.
Firstly, let $\ket{\psi_{\sigma}}$ and $h_{E}^{\sigma}$
be the purification of a given state $\sigma$ and the corresponding Hamiltonian
realising the minimum in \eqnref{eq:QFI_purif_unit} for this state. Furthermore,
let $\ket{\psi_{\rho}}$ be some, for the time being unspecified, purification
of another state $\rho$. At this point, it should be noticed that both purifications
$\ket{\psi_{\sigma}}$ and $\ket{\psi_{\rho}}$ can be chosen so that
they belong to the same Hilbert space (in other words, the ancillary
Hilbert space $\mathcal{H}_{E}$ can be taken the same for both purifications).

Let us finally assume, without any loss of generality, that $F_\t{Q}[\rho;H] \ge F_\t{Q}[\sigma;H]$,
i.e., $\rho$ is a better state with respect to the metrological task considered.
Noting that
\begin{eqnarray}
F_\t{Q}[\rho;H] & = & 4\min_{h_{E}}\langle\psi_{\rho}|(H+h_{E})^{2}|\psi_{\rho}\rangle\nonumber \\
 & \leq & 4\langle\psi_{\rho}|(H+h_{E}^{\sigma})^{2}|\psi_{\rho}\rangle,
\end{eqnarray}
we may upper-bound the QFI difference as
\begin{eqnarray}
F_\t{Q}[\rho;H]-F_\t{Q}[\sigma;H] & \leq & 4\left[\langle\psi_{\rho}|(H+h_{E}^{\sigma})^{2}|\psi_{\rho}\rangle\right.\\
 &  & \hspace{0.4cm}\left.-\langle\psi_{\sigma}|(H+h_{E}^{\sigma})^{2}|\psi_{\sigma}\rangle\right]\nonumber \\
 & = & 4\Tr[(\psi_{\rho}-\psi_{\sigma})(H+h_{E}^{\sigma})^{2}],\nonumber
\end{eqnarray}
where by $\psi_{\rho}$ and $\psi_{\sigma}$ we denote projectors
onto $\ket{\psi_{\rho}}$ and $\ket{\psi_{\sigma}}$ respectively.
Moreover, exploiting the fact that
\begin{equation}
|\Tr(A^{\dagger}B)|\leq\|A\|_{1}\|B\|
\end{equation}
holds for any two operators $A$ and $B$ with $\|\cdot\|$ denoting the matrix norm $\|X\|:=\max_{\|\psi\|=1}\|X|\psi\rangle\|$,
we arrive at the following expression:
\begin{equation}
F_\t{Q}[\rho;H]-F_\t{Q}[\sigma;H]\leq4\,\|\psi_{\rho}-\psi_{\sigma}\|_{1}\,\|H+h_{E}^{\sigma}\|^{2}.
\label{eq:diffbound_TN}
\end{equation}

To obtain a similar relation for the states $\rho$ and $\sigma$
instead of their purifications, let us notice that for any two pure
states $\ket{\psi}$ and $\ket{\phi}$:
\begin{equation}
\|\psi-\phi\|_{1}=2\sqrt{1-F^{2}(\ket{\psi},\ket{\phi})},
\label{TNFidelity}
\end{equation}
where $F$ stands for the Uhlmann fidelity (\ref{eq:fidelity}).
Thus, we may rewrite \eqnref{eq:diffbound_TN} to obtain
\begin{equation}
F_\t{Q}[\rho;H]-F_\t{Q}[\sigma;H]
\leq
8
\sqrt{1-F^{2}(\ket{\psi_{\rho}},\ket{\psi_{\sigma}})}
\,\|H+h_{E}^{\sigma}\|^{2}.
\label{eq:diffbound_fid}
\end{equation}
Crucially, we can still exploit the freedom in choosing
the purification of the state $\rho$. Concretely, we can choose it to be
the one that realises maximum in \eqnref{eq:fidelityPur}, which allows
us to just write $F(\ket{\psi_{\rho}},\ket{\psi_{\sigma}})=F(\rho,\sigma)$.
Hence, \eqnref{eq:diffbound_fid} rewrites as
\begin{equation}
F_\t{Q}[\rho;H]-F_\t{Q}[\sigma;H]\leq
8
\sqrt{1-F^{2}(\rho,\sigma)}
\|H+h_{E}^{\sigma}\|^{2}.
\label{HTC}
\end{equation}
In order to turn the above inequality into the one of \eqnref{ciaglosc1},
we need to make \eqnref{HTC} independent of the
the auxiliary Hamiltonian $h_{E}^{\sigma}$.
We achieve this by proving that its norm can always be upper-bounded
by the norm of the parameter-encoding Hamiltonian, i.e., $\|h_{E}^{\sigma}\|\leq\|H\|$.

For this purpose, we recall that $h_{E}^{\sigma}$ that realises the minimum in
\eqnref{eq:QFI_purif_unit} for the state $\sigma$
(with eigendecomposition $\sigma\!=\!\sum_i\mu_i\ket{\eta_i}\!\bra{\eta_i}$)
must have the form derived in \eqnref{eq:hE_opt}.
Then, we note that for a Hermitian operator $h$ its operator norm can be expressed as
\begin{equation}
\|h\|:=\max_{\ket{\psi},\|\psi\|=1}|\langle\psi|h|\psi\rangle|.
\end{equation}
Let then $\ket{\omega}$ denote the pure state realizing the above maximum for $h_E^{\sigma}$
(it is just the eigenvector of $h_E^{\sigma}$
corresponding to its eigenvalue with the largest absolute
value). Writing $\ket{\omega}$ in the standard basis as
$\ket{\omega}=\sum_{i}\alpha_{i}\ket{i}$, it follows from \eqnref{eq:hE_opt} that the operator
norm of $h_{E}^{\sigma}$ is thus given by
\begin{equation}
\|h_{E}^{\sigma}\|=\left|\sum_{ij}\alpha_{i}^{*}\alpha_{j}\frac{2\sqrt{\mu_{i}\mu_{j}}}{\mu_{i}+\mu_{j}}\langle\eta_{j}|H|\eta_{i}\rangle\right|.\label{Faceless}
\end{equation}
Now, we introduce the following vector
\begin{equation}
\ket{\eta(t)}=\sum_{i}\alpha_{i}^{*}\sqrt{\mu_{i}}\mathrm{e}^{-t\mu_{i}}\ket{\eta_{i}}
\end{equation}
with $t\in[0,\infty)$ being some parameter.
This vector is normalised so that $\int_{0}^{\infty}\mathrm{d}t\,\langle\eta(t)|\eta(t)\rangle=1/2$.
As a result, we may write the operator norm of $h_{E}^{\sigma}$ as follows
\begin{eqnarray}
\|h_{E}^{\sigma}\| & = & 2\left|\int_{0}^{\infty}\mathrm{d}t\,\langle\eta(t)|H|\eta(t)\rangle\right|
\end{eqnarray}
and, realising that
\begin{eqnarray}
\left|\int_{0}^{\infty}\mathrm{d}t\,\langle\eta(t)|H|\eta(t)\rangle\right| & \leq & \|H\|\left|\int_{0}^{\infty}\mathrm{d}t\,\langle\eta(t)|\eta(t)\rangle\right|\nonumber \\
 & = & \frac{1}{2}\|H\|,
\end{eqnarray}
we prove that indeed $\|h_{E}^{\sigma}\|\leq\|H\|$. As a result, we may upper-bound the norm appearing in \eqnref{HTC} as
\begin{equation}
\|H+h_{E}\|^{2}\leq(\|H\|+\|h_{E}\|)^{2}\leq4\|H\|^{2}
\end{equation}
and finally arrive at the the first continuity relation \eref{ciaglosc1}.

To prove the second continuity relation \eref{ciaglosc2},  it is enough
to notice that
\begin{eqnarray}
\sqrt{1-F^{2}(\rho,\sigma)} & = & \sqrt{[1-F(\rho,\sigma)][1+F(\rho,\sigma)]}\\
 & \leq & \sqrt{2\left[1-F(\rho,\sigma)\right]}=D_{B}(\rho,\sigma),\nonumber
\end{eqnarray}
where to obtain the inequality we have used the fact that $F(\rho,\sigma)\leq1$
for any pair of states $\rho,\sigma$.

Lastly, in order to prove the third continuity relation \eref{ciaglosc3}, we exploit
the Fuchs--van de Graaf inequality \cite{Bengtsson2006}, which states that
for any pair of density matrices $\rho$ and $\sigma$ acting on $\mathbbm{C}^{d}$
\begin{equation}
1-F(\rho,\sigma)\leq\frac{1}{2}\|\rho-\sigma\|_{1}.
\end{equation}
Hence, it directly follows that $D_{B}(\rho,\sigma)\leq\sqrt{\|\rho-\sigma\|_{1}}$
and we obtain the last inequality of \eqnref{ciaglosc3}.
\end{proof}

We now consider the case where the two states $\rho$
and $\sigma$ are pure. In this situation, the continuity relations
of Theorem~\ref{thm:ciaglosc}---in particular \eqnref{ciaglosc1}---can be improved
by a factor of $3/4$, as demonstrated in the following lemma.

\begin{lem}
For any pair of pure states~$\ket{\psi},\ket{\phi}\in\mathcal{H}_{S}$, and a parameter-encoding Hamiltonian $H$, the
following inequality holds
\begin{eqnarray}
\left|F_\t{Q}[\ket{\psi};H]-F_\t{Q}[\ket{\phi};H]\right| & \leq & 12\|\psi-\phi\|_{1}\|H\|^{2}\nonumber \\
 & = & 24\sqrt{1-F^{2}(\ket{\psi},\ket{\phi})}\,\|H\|^{2},\nonumber \\
\label{eq:ciaglosc_pure}
\end{eqnarray}
where $\psi$ and $\phi$ denote the projectors onto $\ket{\psi}$ and
$\ket{\phi}$, respectively.
\end{lem}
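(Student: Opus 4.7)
The plan is to exploit the elementary closed-form expression for the QFI on pure states,
\begin{equation*}
F_\t{Q}[\ket{\psi};H] \;=\; 4\,\bigl(\langle\psi|H^{2}|\psi\rangle - \langle\psi|H|\psi\rangle^{2}\bigr),
\end{equation*}
which is just four times the variance of the generator, so that the whole argument can be carried out with trace-norm/operator-norm duality and the purification machinery of Theorem~\ref{thm:ciaglosc} --- which is precisely what loses the factor $3/4$ --- is never invoked. The first step is to split the QFI difference into a ``second moment'' and a ``mean squared'' piece,
\begin{equation*}
F_\t{Q}[\ket{\psi};H] - F_\t{Q}[\ket{\phi};H] \;=\; 4\,\Tr\!\bigl[(\psi-\phi)H^{2}\bigr] \;-\; 4\,\bigl(\langle\psi|H|\psi\rangle^{2} - \langle\phi|H|\phi\rangle^{2}\bigr),
\end{equation*}
where $\psi$ and $\phi$ denote the rank-one projectors onto $\ket{\psi}$ and $\ket{\phi}$.

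I would then estimate the two contributions separately using the duality $|\Tr[(\psi-\phi)A]| \le \|\psi-\phi\|_{1}\|A\|$. The second-moment term is immediate, yielding $|\Tr[(\psi-\phi)H^{2}]| \le \|\psi-\phi\|_{1}\|H\|^{2}$. For the mean-squared term I would factorise the difference of squares as
\begin{equation*}
\langle\psi|H|\psi\rangle^{2} - \langle\phi|H|\phi\rangle^{2} \;=\; \bigl(\langle\psi|H|\psi\rangle - \langle\phi|H|\phi\rangle\bigr)\bigl(\langle\psi|H|\psi\rangle + \langle\phi|H|\phi\rangle\bigr),
\end{equation*}
bound the first factor by $|\Tr[(\psi-\phi)H]| \le \|\psi-\phi\|_{1}\|H\|$, and the second factor by $2\|H\|$, using only that any diagonal matrix element of a Hermitian operator is bounded by its operator norm. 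Summing, the overall prefactor is $4(1+2) = 12$, which proves the first half of \eqnref{eq:ciaglosc_pure}. The fidelity form then follows from the standard rank-two identity $\|\psi-\phi\|_{1} = 2\sqrt{1 - |\langle\psi|\phi\rangle|^{2}} = 2\sqrt{1 - F^{2}(\ket{\psi},\ket{\phi})}$, which converts $12\|\psi-\phi\|_{1}$ into $24\sqrt{1-F^{2}(\ket{\psi},\ket{\phi})}\,\|H\|^{2}$.

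The computation itself presents no real technical obstacle; what is worth pinpointing is the origin of the factor-$3/4$ improvement over Theorem~\ref{thm:ciaglosc}. In the mixed-state argument the only available handle on $F_\t{Q}$ is the purification identity \eref{eq:QFI_purif_unit}, whose use forces the bound $\|H + h_{E}^{\sigma}\|^{2} \le 4\|H\|^{2}$ via the triangle inequality --- a step that costs a full factor of four. On pure states no ancillary Hamiltonian is needed, and that factor of four is replaced by the factor $1+2=3$ arising from the variance split above, so that the prefactor drops from $32$ to $24$ exactly as claimed.
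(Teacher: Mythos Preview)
Your proposal is correct and follows essentially the same route as the paper: both arguments start from the pure-state variance formula, split into the second-moment term $\Tr[(\psi-\phi)H^{2}]$ and the difference-of-squares term, bound each via the trace-norm/operator-norm duality (with the $a^{2}-b^{2}=(a-b)(a+b)$ factorisation for the latter), and then invoke the rank-two identity $\|\psi-\phi\|_{1}=2\sqrt{1-F^{2}(\ket{\psi},\ket{\phi})}$. Your closing remark on the origin of the $3/4$ improvement is a nice addition but not present in the paper's version.
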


\begin{proof}
We begin by recalling that the QFI for pure states reads
\begin{equation}
F_\t{Q}[\ket{\psi};H]=4(\langle\psi|H^{2}|\psi\rangle-\langle\psi|H|\psi\rangle^{2}).
\end{equation}
This allows us to upper-bound the left-hand side of \eqnref{eq:ciaglosc_pure} as
\begin{eqnarray}
\left|F_\t{Q}[\ket{\psi};H]-F_\t{Q}[\ket{\phi};H]\right| & \leq & 4|\Tr(H^{2}\psi)-\Tr(H^{2}\phi)|\nonumber \\
 &  & +4|[\Tr(H\psi)]^{2}-[\Tr(H\phi)]^{2}|.\nonumber \\
\label{nierownosc}
\end{eqnarray}
Let us now concentrate on the second term appearing on the right-hand side of the above inequality.
It can be bounded from above as
\begin{eqnarray}
\left|[\Tr(H\psi)]^{2}-[\Tr(H\phi)]^{2}\right| & = & |\Tr(H\psi)-\Tr(H\phi)|\nonumber \\
 &  & \times|\Tr(H\psi)+\Tr(H\phi)|\nonumber \\
 & \leq & 2|\Tr(H\psi)-\Tr(H\phi)|\|H\|,\nonumber \\
\label{eq:remik_to_pala}
\end{eqnarray}
where the last inequality is a consequence of the fact that for any normalized
$\ket{\psi}$:~$\Tr(\psi H)=\langle\psi|H|\psi\rangle\leq\|H\|$.
Plugging \eqnref{eq:remik_to_pala} into \eqnref{nierownosc}, we obtain
\begin{eqnarray}
\left|F_\t{Q}[\ket{\psi};H]-F_\t{Q}[\ket{\phi};H]\right| & \leq & 4\left|\Tr[(\psi-\phi)H^{2}]\right|\nonumber \\
 &  & +8|\Tr[(\psi-\phi)H]|\|H\|\nonumber \\
\end{eqnarray}
and finally, acknowledging that $|\Tr(X^{\dagger}Y)|\leq\|X\|_{1}\|Y\|$
holds for any two operators $X,Y$, we arrive at
\begin{eqnarray}
\left|F_\t{Q}[\ket{\psi};H]-F_\t{Q}[\ket{\phi};H]\right| & \leq & 12\|\psi-\phi\|_{1}\|H\|^{2}\nonumber \\
 & = & 24\sqrt{1-F^{2}(\ket{\psi},\ket{\phi})}\,\|H\|^{2},\nonumber \\
\end{eqnarray}
where the last equality stems from \eqnref{TNFidelity}.
\end{proof}

Although at first sight the inequality \eref{eq:ciaglosc_pure}
may seem to be less important due the constraint of states purity,
it crucially allows us also to tighten the QFI continuity relation \eref{ciaglosc1}
for the case when one of the states considered is pure, while the other
can possibly be mixed. In fact, we are able to do so by using the
\emph{convex-roof--based} definition of the QFI that has been introduced for protocols with
unitary encoding in \refcite{Toth2013,*Yu2013}:

\begin{thm}
\label{thm:ciaglosc_pure-mixed}
For any mixed state $\rho$ acting on $\mathcal{H}_S$ and any pure state $\ket{\phi}\in\mathcal{H}_S$
and a given parameter-encoding Hamiltonian $H$,
the following inequality holds true:
\begin{equation}
\left|F_\t{Q}[\rho;H]-F_\t{Q}[\ket{\phi};H]\right|\leq24\sqrt{1-F^{2}(\rho,\ket{\phi})}\,\|H\|^{2}.
\label{eq:ciaglosc_pure-mixed}
\end{equation}
\end{thm}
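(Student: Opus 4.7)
My plan is to reduce the mixed-versus-pure comparison to a convex combination of pure-versus-pure comparisons, exploiting the \emph{convex-roof} representation of the QFI for unitarily encoded parameters established in \citep{Toth2013,*Yu2013}. Concretely, that representation states
\[
F_\t{Q}[\rho;H] \;=\; \min_{\{p_i,\ket{\psi_i}\}} \sum_i p_i\, F_\t{Q}[\ket{\psi_i};H],
\]
with the minimum running over all pure-state decompositions $\rho = \sum_i p_i \proj{\psi_i}$, and is attained. This ingredient is absent from the purification-based derivation of Theorem~\ref{thm:ciaglosc}, and it is what should let us transfer the factor $3/4$ that the pure-state lemma \eref{eq:ciaglosc_pure} already provides in the pure-pure setting over to the mixed-pure setting.

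First I would fix an optimal decomposition $\{p_i,\ket{\psi_i}\}$ attaining the minimum above, so that $F_\t{Q}[\rho;H] = \sum_i p_i F_\t{Q}[\ket{\psi_i};H]$. Writing $F_\t{Q}[\ket{\phi};H] = \sum_i p_i F_\t{Q}[\ket{\phi};H]$ trivially and applying the triangle inequality inside the convex sum yields
\[
\left|F_\t{Q}[\rho;H] - F_\t{Q}[\ket{\phi};H]\right| \;\le\; \sum_i p_i \left|F_\t{Q}[\ket{\psi_i};H] - F_\t{Q}[\ket{\phi};H]\right|.
\]
Next, I would invoke the pure-state continuity \eref{eq:ciaglosc_pure} to bound each summand by $24\sqrt{1-|\braket{\psi_i}{\phi}|^2}\,\|H\|^2$, and then apply Jensen's inequality to the concave square root to get
\[
\sum_i p_i \sqrt{1-|\braket{\psi_i}{\phi}|^2} \;\le\; \sqrt{1 - \sum_i p_i |\braket{\psi_i}{\phi}|^2} \;=\; \sqrt{1-\bra{\phi}\rho\ket{\phi}}.
\]

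To close the argument, I would use the fact that for the pure state $\ket{\phi}$ the Uhlmann fidelity reduces to $F(\rho,\ket{\phi}) = \sqrt{\bra{\phi}\rho\ket{\phi}}$, so that the above bound becomes exactly $\sqrt{1-F^2(\rho,\ket{\phi})}$, yielding the claimed inequality \eref{eq:ciaglosc_pure-mixed}. I do not anticipate any substantial technical obstacle inside the argument itself---triangle inequality and Jensen are standard, and reducing the fidelity is immediate for a pure reference state. The only delicate point is the appeal to the convex-roof formula for the QFI and the attainability of its infimum, which must be quoted from \citep{Toth2013,*Yu2013}; without that representation one would be stuck at the weaker constant $32$ inherited from Theorem~\ref{thm:ciaglosc}, since the purification-based bound treats both arguments symmetrically and cannot exploit the purity of $\ket{\phi}$.
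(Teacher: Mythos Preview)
Your proposal is correct and follows essentially the same route as the paper's proof: pick an optimal convex-roof decomposition of $\rho$ for the QFI, bound the difference termwise via the pure--pure lemma \eref{eq:ciaglosc_pure}, pass the convex weights through the square root by concavity, and identify $\sum_i p_i|\braket{\psi_i}{\phi}|^2=\bra{\phi}\rho\ket{\phi}=F^2(\rho,\ket{\phi})$. Nothing is missing.
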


\begin{proof}
Let us first recall that the QFI of a mixed
state can be expressed as the convex roof of the variance \cite{Toth2013,Yu2013}, i.e.,
\begin{equation}
F_\t{Q}[\rho;H]=\inf_{\{\lambda_{k},\ket{\xi_{k}}\}}\sum_{k}p_{k}\,F_\t{Q}[\ket{\xi_{k}};H],
\label{roof}
\end{equation}
where the infimum is taken over all ensembles $\{p_{k},\ket{\xi_{k}}\}$
such that $\sum_{k}p_{k}\proj{\xi_{k}}=\rho$ (importantly $\ket{\xi_k}$ are normalised but
generally \emph{not} orthogonal).
Choosing then $\{p_{k},\ket{\xi_{k}}\}$
to be the ensemble realising the minimum in \eqnref{roof}, one finds that
\begin{eqnarray}
\left|F_\t{Q}[\rho;H]\!-\!F_\t{Q}[\ket{\phi};H]\right| & \leq & \sum_{k}p_{k}|F_\t{Q}[\ket{\xi_{k}};H]-F_\t{Q}[\ket{\phi};H]|\nonumber \\
 & \leq & 24\sqrt{1\!-\!\sum_{k}p_{k}F^{2}(\ket{\xi_{k}},\ket{\phi})}\,\|H\|^{2}\nonumber \\
 & = & 24\sqrt{1-F^{2}(\rho,\ket{\phi})}\,\|H\|^{2},
\end{eqnarray}
where the second inequality follows from the pure-states continuity relation \eref{eq:ciaglosc_pure}
and the concavity of the square root.
\end{proof}

Lastly, we adopt the above proved Theorems \ref{thm:ciaglosc} and \ref{thm:ciaglosc_pure-mixed} to
the case of the metrology protocol considered in the main text, i.e., the setting when the system investigated
consists of $N$ particles, each independently sensing a unitarily encoded parameter of interest.
Then, we may always express the overall system Hamiltonian $H$ as a sum of the local ones:
\begin{equation}
H_\mathrm{loc}:=\sum_{n=1}^{N}h^{(n)},
\label{eq:H_N}
\end{equation}
where $h^{(n)}$ represents the parameter-encoding Hamiltonian
of the $n$-th particle and is conveniently normalised so that $\|h^{(n)}\|\leq1/2$ for all $n$.
In what follows we refer to such Hamiltonians as \textit{local} and denote them by $H_{\mathrm{loc}}$.
In particular, as for any $H_{\mathrm{loc}}$, $\|H_{\mathrm{loc}}\|^2\le N^2/4$,
the three general QFI continuity relations \eref{ciaglosc1}--\eref{ciaglosc3} yield the following:

\begin{cor}
For any pair of $N$-particle states $\rho^N$ and $\sigma^N$
acting on $(\mathbbm{C}^d)^{\ot N}$, and any local Hamiltonian $H_{\mathrm{loc}}$,
the difference in the QFIs of $\rho^N$ and $\sigma^N$
can always be upper-bounded as:

\begin{equation}
|F_\t{Q}[\rho^N;H_{\mathrm{loc}}]-F_\t{Q}[\sigma^N;H_{\mathrm{loc}}]|
\leq
8
\,\sqrt{1-F^{2}(\rho^N,\sigma^N)}\,N^{2},
\label{ciaglosc1_N}
\end{equation}
\begin{equation}
|F_\t{Q}[\rho^N;H_{\mathrm{loc}}]-F_\t{Q}[\sigma^N;H_{\mathrm{loc}}]|\leq8\,D_{B}(\rho^N,\sigma^N)\,N^{2}
\end{equation}
and
\begin{equation}
|F_\t{Q}[\rho^N;H_{\mathrm{loc}}]-F_\t{Q}[\sigma^N;H_{\mathrm{loc}}]|\leq8\,\sqrt{\|\rho^N-\sigma^N\|_{1}}\, N^{2}.
\end{equation}
\end{cor}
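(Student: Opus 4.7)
The plan is to derive the corollary as a direct specialization of Theorem~\ref{thm:ciaglosc} to the multipartite setting in which the parameter-encoding Hamiltonian takes the local form \eqref{eq:H_N}. Each of the three inequalities \eqref{ciaglosc1}--\eqref{ciaglosc3} of Theorem~\ref{thm:ciaglosc} already has exactly the distance-dependent factor that appears in the corollary; the only difference is the prefactor $32\,\|H\|^2$, which must be replaced by the bound $8N^2$. Hence the entire task reduces to controlling the operator norm of $H_{\mathrm{loc}}$ in terms of $N$.

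First I would observe that each single-particle operator $h^{(n)}$, lifted to $(\mathbbm{C}^d)^{\otimes N}$ as $\mathbbm{1}^{\otimes(n-1)}\otimes h^{(n)}\otimes\mathbbm{1}^{\otimes(N-n)}$, retains its single-particle operator norm because tensoring with identities does not change the largest singular value. The triangle inequality for the operator norm, applied to the decomposition \eqref{eq:H_N}, then gives
\begin{equation}
\|H_{\mathrm{loc}}\|\;\le\;\sum_{n=1}^{N}\|h^{(n)}\|\;\le\;\frac{N}{2},
\end{equation}
where the last bound uses the normalization assumption $\|h^{(n)}\|\le 1/2$ recalled just after \eqref{eq:H_N}. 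Squaring yields $\|H_{\mathrm{loc}}\|^{2}\le N^{2}/4$.

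Next I would substitute this bound directly into the three inequalities of Theorem~\ref{thm:ciaglosc}, with $\rho\to\rho^{N}$, $\sigma\to\sigma^{N}$, and $H\to H_{\mathrm{loc}}$. The common prefactor becomes $32\cdot N^{2}/4=8N^{2}$, producing the three claimed bounds in terms of the fidelity, the Bures distance, and the trace norm, respectively. No additional structure is needed---in particular, the proof does not exploit the tensor structure of $H_{\mathrm{loc}}$ beyond the norm bound.

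I do not anticipate any substantive obstacle: the corollary is essentially bookkeeping. The only subtle point worth stating carefully is that the normalization $\|h^{(n)}\|\le 1/2$ refers to the single-particle operator, but the relevant norm entering Theorem~\ref{thm:ciaglosc} is that of the full $N$-particle operator $H_{\mathrm{loc}}$; the triangle inequality above is what bridges the two and delivers the factor $N^{2}$ that is characteristic of the Heisenberg scaling used throughout the main text.
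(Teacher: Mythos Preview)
Your proposal is correct and matches the paper's own argument essentially verbatim: the paper simply notes that $\|H_{\mathrm{loc}}\|^{2}\le N^{2}/4$ and states that the corollary follows directly from inequalities \eqref{ciaglosc1}--\eqref{ciaglosc3}. Your write-up is in fact more explicit than the paper's, spelling out the triangle-inequality step that the paper leaves implicit.
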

Analogously, we may then also rewrite Theorem~\ref{thm:ciaglosc_pure-mixed}, which
deals with the case of the one of the states being pure.
\begin{cor}
For any pair of $N$-particle states, a mixed $\rho^N$ and a pure $\ket{\phi^N}$,
and a local Hamiltonian $H_{\mathrm{loc}}$,
the continuity relation \eref{eq:ciaglosc_pure-mixed} leads to
\begin{equation}
\!\!\left|F_\t{Q}[\rho^N;H_{\mathrm{loc}}]\!-\!F_\t{Q}[\ket{\phi^N};H_{\mathrm{loc}}]\right|
\!\leq
6\sqrt{1\!-\!F^{2}(\rho^N,\ket{\phi^N})}\,N^{2}.
\label{eq:ciaglosc_pure-mixed_N}
\end{equation}
\end{cor}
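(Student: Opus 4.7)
The plan is to obtain this corollary as a direct specialization of Theorem~\ref{thm:ciaglosc_pure-mixed} to the $N$-particle metrological scenario with a local parameter-encoding Hamiltonian of the form $H_{\mathrm{loc}} = \sum_{n=1}^N h^{(n)}$, where each single-particle term is normalised so that $\|h^{(n)}\| \le 1/2$. Since Theorem~\ref{thm:ciaglosc_pure-mixed} already controls the QFI difference for a mixed state $\rho$ and a pure state $\ket{\phi}$ living on a common Hilbert space in terms of the Uhlmann fidelity and the norm of the encoding Hamiltonian, the only genuine work left is to convert $\|H_{\mathrm{loc}}\|^{2}$ into an explicit power of $N$.

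First, I would apply Theorem~\ref{thm:ciaglosc_pure-mixed} with the Hilbert space $\mathcal{H}_S = (\mathbbm{C}^d)^{\otimes N}$, the mixed state $\rho = \rho^N$, the pure state $\ket{\phi} = \ket{\phi^N}$, and the Hamiltonian $H = H_{\mathrm{loc}}$. This yields directly
\begin{equation}
\left|F_\t{Q}[\rho^N; H_{\mathrm{loc}}] - F_\t{Q}[\ket{\phi^N}; H_{\mathrm{loc}}]\right| \le 24 \sqrt{1 - F^{2}(\rho^N,\ket{\phi^N})}\, \|H_{\mathrm{loc}}\|^{2}.
\end{equation}
No new continuity argument is needed at this step; all the convex-roof machinery is already packaged inside Theorem~\ref{thm:ciaglosc_pure-mixed}.

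Second, I would bound $\|H_{\mathrm{loc}}\|$ by the triangle inequality for the operator norm, noting that each $h^{(n)}$ acts nontrivially only on the $n$-th factor and hence, when extended by identities to the full tensor product, preserves its operator norm. Therefore
\begin{equation}
\|H_{\mathrm{loc}}\| \le \sum_{n=1}^{N} \|h^{(n)}\| \le \frac{N}{2},
\end{equation}
so that $\|H_{\mathrm{loc}}\|^{2} \le N^{2}/4$. Substituting this into the inequality above replaces the prefactor $24\|H_{\mathrm{loc}}\|^{2}$ by $6 N^{2}$, which yields precisely \eqnref{eq:ciaglosc_pure-mixed_N}.

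I do not foresee a real obstacle here: the statement is essentially a cosmetic rewriting of Theorem~\ref{thm:ciaglosc_pure-mixed} once the local structure of the encoding is taken into account, and the only quantitative ingredient is the elementary operator-norm bound $\|H_{\mathrm{loc}}\|\le N/2$. The mild subtlety worth flagging in the write-up is that the bound uses $\|h^{(n)}\otimes \mathbbm{1}_{\text{rest}}\| = \|h^{(n)}\|$, which is immediate for the operator norm of a tensor product with the identity; this should be mentioned explicitly so the factor $N^{2}$ (as opposed to something larger after naively summing) is transparent to the reader.
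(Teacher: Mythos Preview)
Your proposal is correct and matches the paper's approach: the corollary is stated immediately after the analogous Corollary for Theorem~\ref{thm:ciaglosc}, and the paper simply remarks that since $\|H_{\mathrm{loc}}\|^2 \le N^2/4$ for any local Hamiltonian of the form \eref{eq:H_N}, one may ``analogously'' rewrite Theorem~\ref{thm:ciaglosc_pure-mixed} to obtain \eqnref{eq:ciaglosc_pure-mixed_N}. You have spelled out precisely this substitution, including the triangle-inequality bound on $\|H_{\mathrm{loc}}\|$ that the paper leaves implicit.
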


\section{Relating the QFI to geometric measures of entanglement}
\label{app:geom_meas}

Here we show that the continuity relation (\ref{ciaglosc1}) can be
used to link the QFI to a \emph{multiparite entanglement measure}.
To this end, we first need to recall the notion of $k$-producibility.

Consider a multipartite pure state $\ket{\psi^{N}}\in(\mathbbm{C}^d)^{\ot N}$.
We call it \textit{$k$-producible} with $k\leq N$ if it can
be written as \citep{Guhne2005,Guhne2006}
\begin{equation}
\ket{\psi^{N}}=\ket{\psi_{1}}\ot\ldots\ot\ket{\psi_{m}},
\end{equation}
with each $\ket{\psi_{i}}$ being a pure state consisting of \emph{at
most} $k$ parties. In particular, it follows from this definition
that a $k$- but not $(k-1)$-producible state contains $k$ particles
that are genuinely entangled \cite{Horodecki2009,Guhne2009}.

This definition can be straightforwardly extended
to mixed states: a mixed state $\rho^{N}$ is $k$-producible
if it is a probabilistic mixture of $k$-producible pure states. By
definition, for every $k$, the set of all $k$-producible states
$\mathcal{S}^N_{k}$ is convex. 
Moreover, such sets of $k$-producible states form a hierarchy
that we schematically depict in \figref{fig:sets}.
In particular, $\mathcal{S}^N_{1}$ contains all fully separable states 
and $\mathcal{S}^N_{N}$ is the set of all states, and, in general, 
$\mathcal{S}^N_{1}\subset\ldots\subset\mathcal{S}^N_{N}$.
Note that the set $\mathcal{S}^N_N\setminus\mathcal{S}^N_{N-1}$
thus consists of all $N$-partite genuinely entangled states.


Exploiting the fact that the sets $\mathcal{S}^N_{k}$ are convex, one
can easily introduce entanglement measures quantifying the extent
to which a given $N$-partite state is non-$k$-producible. More concretely,
for pure states one defines
\begin{equation}\label{wielkadupa}
E_{k}^{\mathrm{prod}}[\ket{\psi^{N}}]:=1-\max_{\ket{\phi^{N}}\in
\mathcal{S}^N_{k}}|\langle\phi^{N}|\psi^{N}\rangle|^{2},
\end{equation}
which is then extended to mixed states by using the convex roof construction,
i.e.,
\begin{equation}
E_{k}^{\mathrm{prod}}[\rho^{N}]:=\inf_{\{p_{i},\ket{\psi_{i}^{N}}\}}\sum_{i}p_{i}E_{k}^{\mathrm{prod}}[\ket{\psi_{i}^{N}}].
\label{KprodMixed}
\end{equation}
The infimum above is taken over all ensembles $\{p_{i},\ket{\psi_{i}^{N}}\}$
realising $\rho^{N}$, i.e., such that $\sum_{i}p_{i}\proj{\psi_{i}^{N}}=\rho^{N}$
(importantly $\ket{\psi_{i}^{N}}$ are normalised but generally \emph{not} orthogonal).

It is important to note that \eqnref{KprodMixed} can be rewritten
with help of Uhlmann fidelity \eref{eq:fidelity}, (see the appendices of \refcite{Streltsov2010}),
so that the optimisation can be performed over all k-producible mixed states:
\begin{equation}
E_{k}^{\mathrm{prod}}[\rho^{N}]=1-\max_{\sigma^{N}\in\mathcal{S}^N_{k}}F^{2}(\rho^{N},\sigma^{N}),
\label{producibility}
\end{equation}
what allows us to relate $E_k^\t{prod}$ to the QFI.

Lastly, let us mention that for the special case of $k\!=\!1$
in \eqnref{KprodMixed}, one recovers the definition of the
\emph{geometric measure of entanglement} (GME), $E_\t{G}[\rho^N]=E_1^\t{prod}[\rho^N]$, that we only consider
in the main text of this work.

\begin{lem}
For any pure $\ket{\psi^{N}}\in(\mathbbm{C}^d)^{\ot N}$
and any local Hamiltonian $H_{\mathrm{loc}}$,
the following inequality holds:
\begin{equation}
F_{Q}[\ket{\psi^{N}};H_{\mathrm{loc}}]\leq kN+6\sqrt{E_{k}^{\mathrm{prod}}(\ket{\psi^{N}})}\,N^{2}.\label{EntQFI2Pure}
\end{equation}
\end{lem}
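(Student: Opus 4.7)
The plan is to bound $F_Q[\ket{\psi^N};H_{\mathrm{loc}}]$ by comparing $\ket{\psi^N}$ to the nearest $k$-producible state and to combine two already-established ingredients: the producibility bound on the QFI, Eq.~\eref{eq:k-prod_bound}, and the pure--mixed continuity relation for the QFI, Eq.~\eref{eq:ciaglosc_pure-mixed_N}. The strategy therefore reduces to a two-line triangle-type estimate, which is why a pure rather than mixed $\ket{\psi^N}$ is assumed---the tighter constant $\xi=6$ in~\eref{eq:ciaglosc_pure-mixed_N} is precisely what delivers the factor $6$ appearing in~\eref{EntQFI2Pure}.

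Concretely, I would proceed as follows. First, invoke the Uhlmann-fidelity formulation of the producibility measure given in~\eref{producibility}; since $\mathcal{S}^N_k$ is compact and convex, one can pick a (possibly mixed) state $\sigma^N_\star\in\mathcal{S}^N_k$ that attains the maximum, so that
\begin{equation}
F^{2}(\ket{\psi^N},\sigma^N_\star)=1-E_{k}^{\mathrm{prod}}[\ket{\psi^N}].
\end{equation}
Second, apply the pure--mixed QFI continuity relation~\eref{eq:ciaglosc_pure-mixed_N} to this pair, which gives
\begin{equation}
F_{Q}[\ket{\psi^N};H_{\mathrm{loc}}]\;\leq\;F_{Q}[\sigma^N_\star;H_{\mathrm{loc}}]+6\sqrt{E_{k}^{\mathrm{prod}}[\ket{\psi^N}]}\,N^{2}.
\end{equation}
Third, since $\sigma^N_\star$ is $k$-producible by construction, the bound~\eref{eq:k-prod_bound} yields $F_{Q}[\sigma^N_\star;H_{\mathrm{loc}}]\leq kN$ (here one uses both additivity and convexity of QFI already exploited in deriving~\eref{eq:k-prod_bound}, so that the bound applies to the possibly mixed $\sigma^N_\star$). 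Substituting this into the previous display gives exactly the claim~\eref{EntQFI2Pure}.

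The only subtlety---rather than a genuine obstacle---is making sure that one is allowed to use the pure--mixed variant of continuity with the tight constant $\xi=6$. This is why it is crucial that the fidelity-based reformulation~\eref{producibility} of the convex-roof definition~\eref{KprodMixed} is available: it lets us transfer all the ``mixedness'' into the comparison state $\sigma^N_\star$, while $\ket{\psi^N}$ itself remains pure, so Theorem~\ref{thm:ciaglosc_pure-mixed} applies directly. Were we instead to use the generic continuity~\eref{ciaglosc1_N} we would obtain the weaker constant $8$ in place of $6$. No other estimates or optimisations over purifications or ensembles are needed; the lemma is essentially a corollary of the combination of~\eref{eq:k-prod_bound} and~\eref{eq:ciaglosc_pure-mixed_N}.
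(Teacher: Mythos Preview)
Your proof is correct and follows essentially the same route as the paper: pick the closest $k$-producible state, apply the pure--mixed continuity relation~\eref{eq:ciaglosc_pure-mixed_N} to get the $6\sqrt{E_k^{\mathrm{prod}}}\,N^2$ term, and then bound the remaining QFI by $kN$ via~\eref{eq:k-prod_bound}. The only cosmetic difference is that the paper takes the optimizer directly from the pure-state definition~\eref{wielkadupa} (so $\ket{\phi_*^N}$ is automatically pure), whereas you invoke the fidelity reformulation~\eref{producibility} and allow a possibly mixed $\sigma_\star^N$; since for pure $\ket{\psi^N}$ the fidelity maximum over $\mathcal{S}_k^N$ is attained on pure states anyway, the two choices coincide and the argument is the same.
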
 \begin{proof}

Denoting by $\ket{\phi_{*}^{N}}$ the $k$-producible state realizing the maximum in
Eq. (\ref{wielkadupa}) for $\ket{\psi^N}$, i.e.,

\begin{equation}
E_G[\ket{\psi^N}]=1-
|\langle\phi_{*}^{N}|\psi^{N}\rangle|^{2},
\end{equation}
it follows from \eqnref{eq:ciaglosc_pure-mixed_N} that
\begin{equation}
F_{Q}[\ket{\psi^{N}}]\leq F_{Q}[\ket{\phi_{*}^{N}}]+6\sqrt{E_{k}^{\mathrm{prod}}[\ket{\psi^{N}}]}\,N^{2}.
\end{equation}

In order to obtain \eqnref{EntQFI2Pure} and complete the proof, it remains to
utilise the fact that for any $k$-producible state $\sigma^{N}\in\mathcal{S}_k^N$,
its QFI is upper-bounded as follows \citep{Toth2012,Hyllus2012}:
\begin{equation}
F_{Q}[\sigma^{N}]\le\left\lfloor \frac{N}{k}\right\rfloor k^{2}+\left(N-\left\lfloor \frac{N}{k}\right\rfloor k\right)^{2}\le kN.
\end{equation}
\end{proof}

Exploiting the above lemma, we can now prove the following general
theorem.

\begin{thm}\label{EntQFI} For any state $\rho^{N}$ acting on $(\mathbbm{C}^{d})^{\otimes N}$
and any local Hamiltonian $H_{\mathrm{loc}}$,
the following inequality is true:
\begin{equation}
F_{Q}[\rho^{N};H_{\mathrm{loc}}]\leq kN+6\sqrt{E_{k}^{\mathrm{prod}}[\rho^{N}]}\,N^{2},\label{EntQFI1}
\end{equation}

\end{thm}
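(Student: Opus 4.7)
The plan is to lift the pure-state lemma just proved (inequality \eref{EntQFI2Pure}) to arbitrary mixed states by combining two ingredients that were not used in that lemma: the \emph{convexity} of the QFI in the state, and the \emph{convex-roof} definition of $E_{k}^{\mathrm{prod}}$ for mixed states given in \eqnref{KprodMixed}. Concretely, I would start from a decomposition of $\rho^{N}$ that, up to $\varepsilon$, attains the infimum defining $E_{k}^{\mathrm{prod}}[\rho^{N}]$, bound $F_Q[\rho^{N};H_{\mathrm{loc}}]$ from above by the averaged QFI of the pure components via convexity, invoke the pure-state bound \eref{EntQFI2Pure} on each component, and finally use Jensen's inequality for the concave function $\sqrt{\cdot}$ to gather the geometric-measure terms into a single square root.

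More explicitly, I would fix an ensemble $\{p_{i},\ket{\psi_{i}^{N}}\}$ with $\rho^{N}=\sum_{i}p_{i}\proj{\psi_{i}^{N}}$ and $\sum_{i}p_{i}E_{k}^{\mathrm{prod}}[\ket{\psi_{i}^{N}}]\leq E_{k}^{\mathrm{prod}}[\rho^{N}]+\varepsilon$. Convexity of the QFI in the state gives
\begin{equation}
F_Q[\rho^{N};H_{\mathrm{loc}}]\leq\sum_{i}p_{i}\,F_Q[\ket{\psi_{i}^{N}};H_{\mathrm{loc}}],
\end{equation}
and plugging the pure-state bound \eref{EntQFI2Pure} into each term yields
\begin{equation}
F_Q[\rho^{N};H_{\mathrm{loc}}]\leq kN+6N^{2}\sum_{i}p_{i}\sqrt{E_{k}^{\mathrm{prod}}[\ket{\psi_{i}^{N}}]}.
\end{equation}
Applying Jensen's inequality to pull $\sum_{i}p_{i}$ inside the square root and using the choice of the ensemble, the sum is upper-bounded by $\sqrt{E_{k}^{\mathrm{prod}}[\rho^{N}]+\varepsilon}$, so that sending $\varepsilon\to 0$ establishes \eqnref{EntQFI1}.

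All three inequalities (QFI convexity, the pure-state lemma, and Jensen for $\sqrt{\cdot}$) point in the same direction, so the chain closes cleanly and I do not anticipate a serious technical obstacle. The only subtlety is that the infimum in \eqnref{KprodMixed} need not be attained on a fixed finite ensemble, but this is handled either by a standard compactness argument on decompositions of $\rho^{N}$ (with cardinality bounded via a Carath\'{e}odory-type reduction) or, as above, by the $\varepsilon$-approximation trick and letting $\varepsilon\to 0$. Since the pure-state bound already carries the correct constant $6$ and the $N^{2}$ scaling, no further optimisation is needed and the constants of \eqnref{EntQFI1} follow at once.
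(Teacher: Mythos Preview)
Your proposal is correct and follows essentially the same route as the paper: convexity of the QFI, the pure-state lemma \eref{EntQFI2Pure} applied termwise, and Jensen's inequality for $\sqrt{\cdot}$, with the convex-roof definition \eref{KprodMixed} closing the chain. The only difference is that the paper simply assumes the infimum in \eqnref{KprodMixed} is attained, whereas you handle this more carefully via the $\varepsilon$-approximation; this is a harmless refinement, not a distinct strategy.
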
 \begin{proof}Let $\{p_{i},\ket{\psi_{i}^{N}}\}$ be an
ensemble realising $\rho^{N}$ for which the minimum in \eqnref{KprodMixed}
is achieved. Then, we have the following chain of inequalities
\begin{eqnarray}
F_{Q}[\rho^{N};H] & \leq & \sum_{i}p_{i}F_{Q}[\ket{\psi_{i}^{N}};H]\nonumber \\
 & \leq & kN+6\sum_{i}p_{i}\sqrt{E_{k}^{\mathrm{prod}}[\ket{\psi_{i}^{N}}]}\,N^2\nonumber \\
 & \leq & kN+6\sqrt{\sum_{i}p_{i}E_{k}^{\mathrm{prod}}[\ket{\psi_{i}^{N}}]}\,N^2\nonumber \\
 & = & kN+6\sqrt{E_{k}^{\mathrm{prod}}[\rho^{N}]}\,N^2,\nonumber \\
\end{eqnarray}
where the second and the third inequalities follow respectively from \eqnref{EntQFI2Pure}
and the concavity of the square root, while the last equality
stems from the definition of $E_{k}^{\mathrm{prod}}$ \eref{KprodMixed}.
\end{proof}

\textit{Remark.} For $k=1$, inequality \eref{EntQFI1} relates
the QFI for any Hamiltonian of the form \eref{eq:H_N}, $H_\mathrm{loc}$,
to the \emph{geometric measure of entanglement} $E_\t{G}$
used in the main text:
\begin{equation}
F_{Q}[\rho^{N};H_\mathrm{loc}]\leq N+6\sqrt{E_\t{G}[\rho^{N}]}\,N^{2}.
\end{equation}
On the other hand, \eref{EntQFI1} can be used to derive a lower
bound on $E_{k}^{\mathrm{prod}}$:
\begin{equation}
E_{k}^{\mathrm{prod}}(\rho^{N})\geq
\left\{
\begin{array}{cc}
\left(\frac{F_{Q}[\rho^{N};H_\mathrm{loc}]-kN}{6N^{2}}\right)^{2}, & F_{Q}[\rho^{N};H_\mathrm{loc}]> kN\\
0, & F_{Q}[\rho^{N};H_\mathrm{loc}]\leq kN
\end{array}
\right.
\label{bound}
\end{equation}
whose right-hand side scales with $N$ as $(F_{Q}[\rho^{N};H_\mathrm{loc}]/6N^{2})^2$
in the limit of large $N$.

The bound (\ref{bound}) is in general not tight.
For instance, for the $N$-qubit GHZ state
\begin{equation}\label{GHZ}
\ket{\psi^N_\textrm{\tiny GHZ}}=\frac{1}{\sqrt{2}}(\ket{0}^{\ot N}+\ket{1}^{\ot N})
\end{equation}
the QFI with the Hamiltonian $H_\mathrm{loc}=(1/2)\sum_{i}\sigma_{i}^{z}$ amounts
to $F_{Q}[\ket{\psi^N_\textrm{\tiny GHZ}}]=N^{2}$, and hence our bound gives
$E_{\t{G}}[\ket{\psi^N_\textrm{\tiny GHZ}}]\geq(N^{2}-N)^{2}/36N^{4}$,
which tends to $1/36$ for $N\to\infty$, while it is known that $E_{\t{G}}[\ket{\psi^N_\textrm{\tiny GHZ}}]=1/2$.
Nevertheless, it allows one to lower bound $E_{k}^{\mathrm{prod}}$
for states for which only the QFI is easy to compute. Generally speaking,
the bound \eref{bound} provides a non-trivial estimation of $E_{k}^{\mathrm{prod}}$
for all states for which $F_{Q}[\rho^{N};H_\mathrm{loc}]>kN$.

\section{Estimating GME for Werner-type states}
\label{app:EG_est}

Let us consider the following class of $N$-qubit Werner-type states, i.e., a mixture
of the GHZ state (\ref{GHZ}) and the maximally mixed state:

\begin{equation}
\rho^N_p=p\proj{\psi^N_\textrm{\tiny GHZ}}+(1-p)\frac{\mathbbm{1}_{2^N}}{2^N}.\label{GHZwithWhitenoise}
\end{equation}
The GME for these states
can be upper bounded as $E_\t{G}[\rho^N_p]\leq p/2$. This follows from the facts that
$E_\t{G}$ is convex and that $E_\t{G}[\ket{\psi^N_\textrm{\tiny GHZ}}]=1/2$ for any $N$. Our aim here is to show
that for sufficiently large $N$ this upper bound is very close to the value of
$E_\t{G}[\rho^N_p]$.

To this end, let us first notice that very recently in
\refcite{GHZStates} it has been shown that computation of
$E_\t{G}[\rho^N_p]$ simplifies to the following maximization
\begin{equation}
E_\t{G}[\rho^N_p]=\max_{\mu\in[0,\mu_m]}f^N_p(\mu),
\end{equation}
where $\mu_m=2^{N-3}/(2^{N-2}-1)$ and
\begin{eqnarray}
f^N_p(\mu)&=&\frac{1}{2}\left[1-\mu-\sqrt{\gamma}+2p\mu\right.\nonumber\\
&&\hspace{0.5cm}\left.+\frac{1-p}{2^N}\left(2\mu+\frac{\mu(\mu+\sqrt{\alpha})}{\mu-1}\right)\right]
\end{eqnarray}
with $\gamma=(\mu-1)^2+2^{3-N}\mu$ and $\alpha=1-\mu+\mu^2$.

Now, it is clear that $E_\t{G}[\rho^N_p]\geq f_p^N(\mu_m)$. It is also not difficult to see that
for $N\to\infty$, $\mu_m\to 1/2$, $\gamma_m\to 1/4$ and $\alpha_m\to 3/4$, where $\gamma_m$ and $\alpha_m$
are $\gamma$ and $\alpha$ computed for $\mu_m$. All this implies that
$f_p^N(\mu_m)\to p/2$, and thus $E_\t{G}[\rho^N_p]\to p/2$ for large $N$.

\begin{figure}[!t]
\includegraphics[width=1\columnwidth]{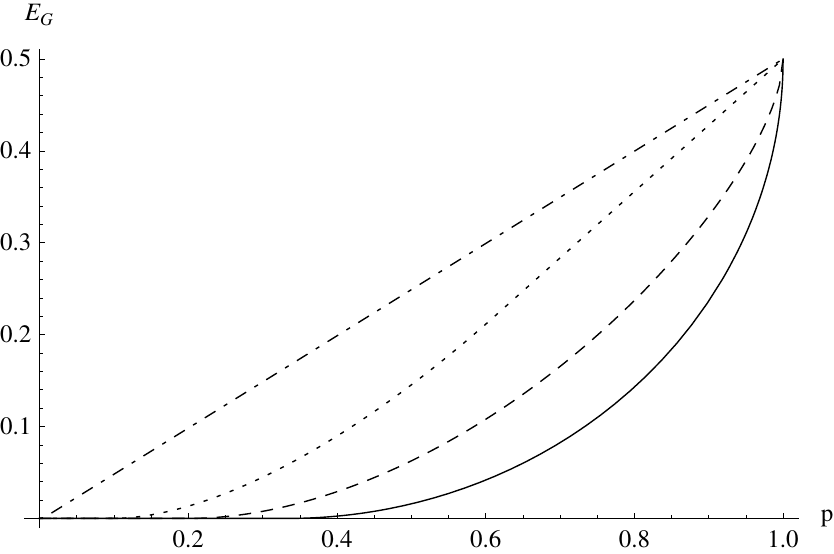}
\caption{%
\textbf{Geometric measure of entanglement} (GME), $E_{G}$, as a function of the
parameter $p$ for states given in Eq.~(\ref{GHZwithWhitenoise}). The number
of qubits, $N$, is chosen to be:~2 (\emph{solid line}), 3 (\emph{dashed line}), 4 (\emph{dotted
line}), and 10 (\emph{dot-dashed line}).}
\label{fig:Eg}
\end{figure}

Furthermore, one should note that the convergence of $E_\t{G}[\rho^N_p]$ to $p/2$ with $N\to \infty$
is quite fast. In other words, already for systems of moderate size ($N=10$)
the upper bound $E_\t{G}[\rho^N_p]\leq p/2$ is a good approximation to $E_\t{G}[\rho^N_p]$.
For this purpose, let us consider the following rough estimation of
$|f_p^N(\mu_m)-p/2|$. We first notice that
\begin{eqnarray}\label{Raimat}
\left|f_p^N(\mu_m)-\frac{p}{2}\right|&\leq& \frac{1}{2}\left|1-\mu_m-\sqrt{\gamma_m}\right|+\frac{p}{2}|2\mu_m-1|\nonumber\\
&&+\frac{1-p}{2^N}\left|2\mu_m+\frac{\mu_m(\mu_m+\sqrt{\alpha_m})}{\mu_m-1}\right|\nonumber\\
\end{eqnarray}
Let us now bound each of the three terms appearing in the above expression. First, we see that
\begin{eqnarray}
|1-\mu_m-\sqrt{\gamma_m}|\leq \frac{1}{2^{N-2}-1}+\sqrt{\frac{1}{2^{N-2}-1}}
\end{eqnarray}
Second,
\begin{equation}
|2\mu_m-1|= \frac{1}{2^{N-2}-1},
\end{equation}
And finally, for $N\geq 4$,
\begin{equation}
\left|2\mu_m+\frac{\mu_m(\mu_m+\sqrt{\alpha_m})}{\mu_m-1}\right|\leq \beta
\end{equation}
with $\beta=4/3+2\left(2+\sqrt{7}\right)/3\approx 4.43$.
All this gives
\begin{equation}
\left|f_p^N(\mu_m)-\frac{p}{2}\right|\leq \frac{1}{2^{N-2}-1}+\frac{1}{2}\sqrt{\frac{1}{2^{N-2}-1}}+\frac{\beta}{2^N}.
\end{equation}
One then sees that already for $N=10$, the difference between then upper bound and the actual value of the
GME for $\rho^N_p$ is at most $0.04$.

To demonstrate the fast convergence, we have plotted in \figref{fig:Eg} the GME, 
$E_\t{G}$, as a function of the parameter $p$ for $N$
being:~2, 3, 4, and 10. For large $N$, the curve becomes almost indistinguishable from $p/2$,
which is clear on \figref{fig:Eg} already for $N=10$.

\bibliographystyle{apsrev4-1}
\bibliography{ent_metro}

\end{document}